\documentclass[a4paper,UKenglish]{lipics-v2018}
\pdfoutput=1

\usepackage{microtype}
\usepackage
[
	algo2e,
	ruled,
	vlined,
	linesnumbered
]
{algorithm2e}
\usepackage{mathtools}
\DeclareMathOperator{\level}{level}
\usepackage{booktabs}
\usepackage[table]{xcolor}
\usepackage{siunitx}

\usepackage{xspace}

\usepackage{timetravel}
\setCounters{theorem}


\usepackage[numbers,sort&compress]{natbib}
\bibliographystyle{plainnat}

\usepackage{todonotes}
\usepackage{paralist}

\title{Efficiently Generating Geometric Inhomogeneous and Hyperbolic
  Random Graphs}

\titlerunning{Efficiently Generating Geometric Inhomogeneous and Hyperbolic
  Random Graphs}

\author{Thomas Bl\"asius}{Hasso Plattner Institute, Potsdam, Germany}{thomas.blaesius@hpi.de}{}{}
\author{Tobias Friedrich}{Hasso Plattner Institute, Potsdam, Germany}{tobias.friedrich@hpi.de}{}{}
\author{Maximilian Katzmann}{Hasso Plattner Institute, Potsdam, Germany}{maximilian.katzmann@hpi.de}{}{}
\author{Ulrich Meyer}{Goethe University, Frankfurt, Germany}{umeyer@ae.cs.uni-frankfurt.de}{}{}
\author{Manuel Penschuck}{Goethe University, Frankfurt, Germany}{mpenschuck@ae.cs.uni-frankfurt.de}{}{}
\author{Christopher Weyand}{Hasso Plattner Institute, Potsdam, Germany}{christopher.weyand@hpi.de}{}{}


\authorrunning{T.~Bl\"asius, T.~Friedrich, M.~Katzmann, U.~Meyer, M.~Penschuck, C.~Weyand}

\Copyright{Anonymous}



\category{}

\relatedversion{}

\supplement{}

\funding{}


\EventEditors{John Q. Open and Joan R. Access}
\EventNoEds{2}
\EventLongTitle{42nd Conference on Very Important Topics (CVIT 2016)}
\EventShortTitle{CVIT 2016}
\EventAcronym{CVIT}
\EventYear{2016}
\EventDate{December 24--27, 2016}
\EventLocation{Little Whinging, United Kingdom}
\EventLogo{}
\SeriesVolume{42}
\ArticleNo{23}
\nolinenumbers 
\hideLIPIcs  

\DeclareMathOperator{\acosh}{\ensuremath{arccosh}}


\newcommand{\formatImpl}[1]{\textsc{#1}}
\newcommand{\nkquad}{\formatImpl{NkQuad}\xspace}
\newcommand{\nkgen}{\formatImpl{NkGen}\xspace}
\newcommand{\nkopt}{\formatImpl{NkOpt}\xspace}
\newcommand{\hypergen}{\formatImpl{HyperGen}\xspace}
\newcommand{\rhg}{\formatImpl{RHG}\xspace}
\newcommand{\srhg}{\formatImpl{sRHG}\xspace}
\newcommand{\hypergirgs}{\formatImpl{HyperGIRGs}\xspace}
\newcommand{\embedder}{\formatImpl{Embedder}\xspace}

\newcommand{\pairwise}{\formatImpl{Pairwise}\xspace}
\newcommand{\quadtree}{\formatImpl{QuadTree}\xspace}

\newcommand{\dhrg}{\ensuremath{d_{\mathrm{HRG}}}}
\newcommand{\dgirg}{\ensuremath{d_{\mathrm{GIRG}}}}
\newcommand{\Dgirg}{\ensuremath{D_{\mathrm{GIRG}}}}

\begin{document}

\maketitle

\begin{abstract}
  Hyperbolic random graphs (HRG) and geometric inhomogeneous random
  graphs (GIRG) are two similar generative network models that were
  designed to resemble complex real world networks.  In particular,
  they have a power-law degree distribution with controllable
  exponent $\beta$, and high clustering that can be
  controlled via the temperature $T$.

  We present the first implementation of an efficient GIRG generator
  running in expected linear time.  Besides varying temperatures, it
  also supports underlying geometries of higher dimensions.  It is
  capable of generating graphs with ten million edges in under a
  second on commodity hardware.  The algorithm can be adapted to HRGs.
  Our resulting implementation is the fastest sequential HRG
  generator, despite the fact that we support non-zero temperatures.
  Though non-zero temperatures are crucial for many applications, most
  existing generators are restricted to $T = 0$.  
  Our generators support parallelization, although this is not the focus of this paper.
  We note that our generators draw from the correct
  probability distribution, i.e., they involve no approximation.

  Besides the generators themselves, we also provide an efficient
  algorithm to determine the non-trivial dependency between the
  average degree of the resulting graph and the input parameters of
  the GIRG model.  This makes it possible to specify the expected average degree as
  input.

  Moreover, we investigate the differences between HRGs and GIRGs,
  shedding new light on the nature of the relation between the two
  models.  Although HRGs represent, in a certain sense, a special case
  of the GIRG model, we find that a straight-forward inclusion does
  not hold in practice. However, the difference is negligible for most
  use cases.
\end{abstract}


\newpage

\section{Introduction}

Network models play an important role in different fields of science.
From the perspective of network science, models can be used to explain
observed behavior in the real world.  To mention one example, Watts
and Strogatz~\cite{ws-cdswn-98} observed that few random long-range
connections suffice to guarantee a small diameter.  This explains why
many real-world networks exhibit the small-world property despite
heavily favoring local over long-range connections.  From the
perspective of computer science, and specifically algorithmics,
realistic random networks can provide input instances for graph
algorithms.  This facilitates theoretical approaches (e.g.,
average-case analysis), as well as extensive empirical evaluations by
providing an abundance of benchmark instances, solving the pervasive
scarcity of real-world instances.

There are some crucial features that make a network model useful.  The
generated instances have to resemble real-world networks.  The model
should be as simple and natural as possible to facilitate theoretical
analysis, and to prevent untypical artifacts.  And it must be possible
to efficiently draw networks from the model.  This is particularly
important for the empirical analysis of model properties and for
generating benchmark instances.

A model that has proven itself useful in recent years is the
\emph{hyperbolic random graph (HRG)} model~\cite{kpk-hgcn-10}.  HRGs
are generated by drawing vertex positions uniformly at random from a
disk in the hyperbolic plane.  Two vertices are joined by an edge if
and only if their distance lies below a certain threshold; see
Section~\ref{sec:hrgmodel}.  HRGs resemble real-world networks with
respect to crucial properties.  Most notable are the \emph{power-law
  degree distribution}~\cite{gpp-rhg-12} (i.e., the number of vertices
of degree $k$ is roughly proportional to $k^{-\beta}$ with
$\beta \in (2, 3)$), the high \emph{clustering
  coefficient}~\cite{gpp-rhg-12} (i.e., two vertices are more likely
to be connected if they have a common neighbor), and the small
diameter~\cite{fk-dhrg-18,ms-dkrg-17}.  Moreover, HRGs are accessible
for theoretical analysis (see,
e.g.,~\cite{gpp-rhg-12,fk-dhrg-18,ms-dkrg-17,bff-espsf-18}).  Finally
there is a multitude of efficient generators with different
emphases~\cite{aok-hgg-15,lmp-grhgst-15,lm-qpnsdse-16,lolm-gmcnh-16,p-gprhg-17,fls-cfmdgg-18,flm-cfmdgg-17};
see Section~\ref{sec:comp-with-exist} for a discussion.

Closely related to HRGs is the \emph{geometric inhomogeneous random
  graph (GIRG)} model~\cite{bkl-girg-19}.  Here every vertex has a
position on the $d$-dimensional torus and a weight following a power
law.  Two vertices are then connected if and only if their distance on
the torus is smaller than a threshold based on the product of their
weights.  When using positions on the circle ($d = 1$), GIRGs
approximate HRGs in the following sense: the processes of generating a
HRG and a GIRG can be coupled such that it suffices to decrease and
increase the average degree of the GIRG by only a constant factor to
obtain a subgraph and a supergraph of the corresponding HRG,
respectively.
Compared to HRGs, GIRGs are potentially easier to analyze, generalize
nicely to higher dimensions, and the weights allow to directly adjust
the degree distribution.

Above, we described the idealized \emph{threshold variants} of the
models, where two vertices are connected if an only if their distance
is small enough.  Arguably more realistic are the \emph{binomial
  variants}, which allow longer edges and shorter non-edges with a
small probability.  This is achieved with an additional parameter $T$,
called \emph{temperature}.  For $T \to 0$, the binomial and threshold
variants coincide.  Many publications focus on the threshold case, as
it is typically simpler.  This is particularly true for generation
algorithms: in the threshold variants one can ignore all vertex pairs
with sufficient distance, which can be done using geometric data
structures.  In the binomial case, any pair of vertices could be
adjacent, and the search space cannot be reduced as easily.  For
practical purposes, however, a non-zero temperature is crucial as
real-world networks are generally assumed to have positive temperature.
Moreover, from an algorithmic perspective, the threshold variants
typically produce particularly well-behaved instances, while a higher
temperature leads to difficult problem inputs.  Thus, to obtain
benchmark instances of varying difficulty, generators for the binomial
variants are key.
    
\subsection{Contribution \& Outline}
\label{sec:contribution}

Based on the algorithm by Bringmann, Keusch, and
Lengler~\cite{bkl-girg-19}, we provide an efficient and flexible GIRG
generator.  It includes the binomial case and allows higher
dimensions.  Its expected running time is linear.  To the best of our
knowledge, this is the first efficient generator for the GIRG model.
Moreover, we adapt the algorithm to the HRG model, including the
binomial variant.  Compared to existing HRG generators (most of which
only support the threshold variant), our implementation is the fastest
sequential HRG generator.

A refactoring of the original GIRG algorithm~\cite{bkl-girg-19} allows
us to parallelize our generators.  They do not use multiple processors
as effectively as the threshold-HRG generator by
Penschuck~\cite{p-gprhg-17}, which was specifically tailored towards
parallelism.  However, in a setting realistic for commodity hardware
(8 cores, 16 threads), we still achieve comparable run times.

Our generators come as an open-source C++ library\footnote{\url{https://github.com/chistopher/girgs}}
with documentation, command-line interface, unit tests, micro
benchmarks, and OpenMP~\cite{openmp-oapiv-18} parallelization using
shared memory.  An integration into NetworKit~\cite{ssm-n-16} is
planned.

Besides the efficient generators, we have three secondary
contributions.
\begin{inparaenum}[(I)]
\item We provide a comprehensible description of the sampling
  algorithm that should make it easy to understand how the algorithm
  works, why it works, and how it can be implemented.  Although the
  core idea of the algorithm is not new~\cite{bkl-girg-19}, the
  previous description is somewhat technical.
\item The expected average degree can be controlled via an input
  parameter.  However, the dependence of the average degree on the
  actual parameter is non-trivial.  In fact, given the average degree,
  there is no closed formula to determine the parameter.  We provide a
  linear-time algorithm to estimate it.
\item We investigate how GIRGs and HRGs actually relate to each other
  by measuring how much the average degree of the GIRG has to be
  decreased and increased to obtain a subgraph and supergraph of the
  HRG, respectively.  
\end{inparaenum}

In the following we first discuss our main contribution in the context
of existing HRG generators.  In Section~\ref{sec:models}, we formally
define the GIRG and HRG models.  Afterwards we describe the sampling
algorithm in Section~\ref{sec:algorithm}.  In Section~\ref{sec:impl}
we discuss implementation details, including the parameter estimation
for the average degree (Section~\ref{sec:constants}) as well as
multiple performance improvements.  Section~\ref{sec:experiments}
contains our experiments: we investigate the scaling behavior of our
generator in Section~\ref{sec:girgexperiments}, compare our HRG
generator to existing ones in Section~\ref{sec:hrgexperiments}, and
compare GIRGs to HRGs in Section~\ref{sec:hrgisgirgexperiments}.

\subsection{Comparison with Existing Generators}
\label{sec:comp-with-exist}

\begin{table}
  \definecolor{tablegray}{gray}{0.9}
  \rowcolors{1}{white}{tablegray}
  \centering
  \caption{Existing hyperbolic random graph generators.  The columns
    show the names used throughout the paper; the conference
    appearance; a reference (journal if available); whether the
    generator supports the binomial model; and the asymptotic running
    time.  The time bounds hold in the worst case (wc), with high
    probability (whp), in expectation (exp), or empirically~(emp).}
  \begin{tabular}{l l  c c   c   c   c}
    \toprule
    Name           & First Published & Ref.                   & Binom.     & Running Time                   \\
    \midrule
    \pairwise      & CPC'15          & \cite{aok-hgg-15}      & \checkmark & $\Theta(n^2)$ (wc)             \\
    \quadtree      & ISAAC'15        & \cite{lmp-grhgst-15}   &            & $O((n^{3/2} + m) \log n)$ (wc) \\
    \nkquad        & IWOCA'16        & \cite{lm-qpnsdse-16}   & \checkmark & $O((n^{3/2} + m) \log n)$ (wc) \\
    \nkgen, \nkopt & HPEC'16         & \cite{lolm-gmcnh-16}   &            & $O(n\log n+m)$ (emp)           \\
    \embedder      & ESA'16          & \cite{bfkl-eesfghp-18} & \checkmark & $\Theta(n+m)$ (exp)            \\
    \hypergen      & SEA'17          & \cite{p-gprhg-17}      &            & $O(n\log\log n + m)$ (whp)     \\
    \rhg           & IPDPS'18        & \cite{fls-cfmdgg-18}   &            & $\Theta(n+m)$ (exp)            \\
    \srhg          & preprint        & \cite{flm-cfmdgg-17}   &            & $\Theta(n+m)$ (exp)            \\
    \hypergirgs    & this paper      &                        & \checkmark & $\Theta(n+m)$ (exp)            \\
    \bottomrule
  \end{tabular}
  \label{tab:related}
\end{table}

We are not aware of a previous GIRG generator.  Concerning HRGs, most
previous algorithms only support the threshold case; see
Table~\ref{tab:related}.  The only published exceptions are the
trivial quadratic algorithm~\cite{aok-hgg-15}, and an
$O((n^{3/2} + m) \log n)$ algorithm~\cite{lm-qpnsdse-16} based on a
quad-tree data structure~\cite{lmp-grhgst-15}.  The latter is part of
NetworkKit; we call it \nkquad.  Moreover, the code for a hyperbolic
embedding algorithm~\cite{bfkl-eesfghp-18} includes an HRG generator
implemented by Bringmann based on the GIRG
algorithm~\cite{bkl-girg-19}; we call it \embedder in the following.
\embedder has been widely ignored as a high performance generator.
This is because it was somewhat hidden, and it is heavily outperformed
by other threshold generators.  Experiments show that our generator
\hypergirgs is much faster than \nkquad, which is to be expected
considering the asymptotic running time.  Moreover, on a single
processor, we outperform \embedder by an order of magnitude for
$T = 0$ and by a factor of~$4$ for higher temperatures.  As \embedder
does not support parallelization, this speed-up increases for multiple
processors.

For the threshold variant of HRGs, there are the following generators.
The quad-tree data structure mentioned above was initially used for
a threshold generator (\quadtree)~\cite{lmp-grhgst-15}.  It was later
improved leading to the algorithm currently implemented in NetworKit
(\nkgen)~\cite{lolm-gmcnh-16}.  A later re-implementation by
Penschuck~\cite{p-gprhg-17} improves it by about a factor of~2
(\nkopt).  However, the main contribution of
Penschuck~\cite{p-gprhg-17} was a new generator that features
sublinear memory and near optimal parallelization (\hypergen).  Up to
date, \hypergen was the fastest threshold-HRG generator on a single
processor.  Our generator, \hypergirgs, improves by a factor of~$1.3$
-- $2$ (depending on the parameters) but scales worse for more
processors.  Finally, Funke et al.~\cite{fls-cfmdgg-18} provide a
generator designed for a distributed setting to generate enormous
instances (\rhg).  Its run time was later further reduced
(\srhg)~\cite{flm-cfmdgg-17}.


\section{Models}
\label{sec:models}

\subsection{Geometric Inhomogeneous Random Graphs}
\label{sec:girgmodel}

GIRGs~\cite{bkl-girg-19} combine elements from random geometric
graphs~\cite{g-rpn-61} and Chung-Lu
graphs~\cite{cl-adrgged-02,cl-ccrgg-02}.  
%
Let $V = \{1,\dots,n\}$ be a set of vertices with weights
$w_1,\dots,w_n$ following a power law with exponent $\beta>2$.  Let
$W$ be their sum.  Let $\mathbb{T}^d$ be the $d$-dimensional torus for
a fixed dimension $d\geq1$ represented by the $d$-dimensional cube
$[0,1]^d$ where opposite boundaries are identified.  For each vertex
$v \in V$, let $x_v\in\mathbb{T}^d$ be a point drawn uniformly and
independently at random.  For $x,y \in \mathbb{T}^d$ let $||x-y||$
denote the $L_\infty$-norm on the torus, i.e.  $||x-y|| = \max_{1\leq
  i\leq d} \min\{|x_i-y_i|, 1-|x_i-y_i|\}$.  Two vertices $u\neq v$
are independently connected with probability $p_{uv}$.  For a positive
temperature $0 < T < 1$,
\begin{equation}
  \label{eq:girg-binom}
p_{uv} = \min\left\{ 1, c\left( \frac{w_uw_v/W}{{||x_u - x_v||}^d} \right)^{1/T} \right\}
\end{equation}
while for $T=0$ a threshold variant of the model is obtained with
\begin{equation*}
  \label{eq:girg-thresh}
p_{uv} = 
\begin{cases}
	1 & \text{if } ||x_u - x_v|| \leq c(w_uw_v/W)^{1/d}, \\
	0 & \text{else.}
\end{cases}
\end{equation*}
The constant $c>0$ controls the expected average degree.  We note that
the above formulation slightly deviates from the original definition;
see Section~\ref{sec:comp-girgs-hrgs} for more details.


\subsection{Hyperbolic Random Graphs}
\label{sec:hrgmodel}

HRGs~\cite{kpk-hgcn-10} are generated by sampling random positions in
the hyperbolic plane and connecting vertices that are close.  More
formally, let $V = \{1,\dots,n\}$ be a set of vertices.  Let
$\alpha > 1/2$ and $C \in \mathbb{R}$ be two constants, where $\alpha$
controls the power-law degree distribution with exponent
$\beta = 2\alpha{+}1 > 2$, and $C$ determines the average degree
$\bar d$.  For each vertex $v \in V$, we sample a random point
$p_v = (r_v, \theta_v)$ in the hyperbolic plane, using polar
coordinates.  Its angular coordinate $\theta_v$ is chosen uniformly
from $[0, 2\pi]$ while its radius $0 \le r_v < R$ with
$R = 2\log(n) + C$ is drawn according to the density function
$f(r) = \frac{\alpha \sinh(\alpha r)}{\cosh(\alpha R) - 1}$.  In the
threshold case of HRGs two vertices $u \not= v$ are connected if and
only if their distance is below~$R$.  The hyperbolic distance
$d(p_u, p_v)$ is defined via
$\cosh( d(p_u, p_v) ) = \cosh(r_u)\cosh(r_v) -
\sinh(r_u)\sinh(r_v)\cos(\theta_u - \theta_v)$, where the angle
difference $\theta_u - \theta_v$ is modulo~$\pi$.

The binomial variant adds a temperature~$T \in [0, 1]$ to control the
clustering, with lower temperatures leading to higher clustering.  Two
nodes $u,v \in V$ are then connected with probability
$p_T(d(p_u, p_v))$ where $p_T(d) = (\exp[ (d-R)/(2T) ] + 1)^{-1}$.
For $T \rightarrow 0$, the two definitions (threshold and binomial)
coincide.

\subsection{Comparison of GIRGs and HRGs}
\label{sec:comp-girgs-hrgs}

Bringmann et al.~\cite{bkl-girg-19} show that the HRG model can be
seen as a special case of the GIRG model in the following sense.  Let
$\dhrg$ be the average degree of a HRG.  Then there exist GIRGs with
average degree $\dgirg$ and $\Dgirg$ with $\dgirg \le \dhrg \le
\Dgirg$ such that they are sub- and supergraphs of the HRG,
respectively.  Moreover, $\dgirg$ and $\Dgirg$ differ only by a
constant factor.  Formally, this is achieved by using the big-O
notation instead of a single constant $c$ for the connection
probability.  We call this the \emph{generic GIRG framework}.  It
basically captures any specific model whose connection probabilities
differ from Equation~(\ref{eq:girg-binom}) by only a constant factor.
From a theoretical point of view this is useful as proving something
for the generic GIRG framework also proves it for any manifestation,
including HRGs.

To see how HRGs fit into the generic GIRG framework, consider the
following mapping~\cite{bkl-girg-19}.  Radii are mapped to weights
$w_v = e^{(R-r_v)/2}$, 
and angles are scaled to fit on a $1$-dimensional torus
$x_v = \theta_v / (2\pi)$.  One can then see that the hyperbolic
connection probability $p_T(d)$ under the provided mapping deviates
from Equation~(\ref{eq:girg-binom}) by only a constant.  Thus, $c$ in
Equation~(\ref{eq:girg-binom}) can be chosen such that all GIRG
probabilities are larger or smaller than the corresponding HRG
probabilities, leading to the two average degrees $\dgirg$ and
$\Dgirg$ mentioned above.
Bringmann et al.~\cite{bkl-girg-19} note that the two constants, which
they hide in the big-$O$ notation, do not have to match.  They leave
it open if they match, converge asymptotically, or how large the
interval between them is in practice.  We investigate this empirically
in Section~\ref{sec:hrgisgirgexperiments}.


\newcommand{\maxLevel}{D}
\newcommand{\maxLayer}{K}
\newcommand{\comparisonLevel}[2]{CL(#1, #2)}
\newcommand{\insertionLevel}[1]{I(#1)}

\section{Sampling Algorithm}
\label{sec:algorithm}

As mentioned in the introduction, the core of our sampling algorithm
is based on the algorithm by Bringmann et al.~\cite{bkl-girg-19}.  In
the following, we first give a description of the core ideas and then
work out the details that lead to an efficient implementation.

To explain the idea, we make two temporary assumptions and relax them
in Section~\ref{sec:appr1} and Section~\ref{sec:appr2}, respectively.
For now, assume that all weights are equal and consider only the
threshold variant $T = 0$.
The task is to find all vertex pairs that form an edge, i.e., their distance is below the threshold $c(w_uw_v/W)^{1/d}$.
Since all weights are equal, the threshold in this restricted scenario is the same for all vertex pairs.
One approach to quickly identify adjacent vertices is to partition the
ground space into a grid of cells.  The size of the cells should be
chosen, such that
\begin{inparaenum}[(I)]
\item \label{item:small-cells}the cells are as small as possible and
\item \label{item:cells-of-suff-diam}the diameter of cells is larger than the threshold $c(w_uw_v/W)^{1/d}$.
\end{inparaenum}
The latter implies that only vertices in neighboring cells can be
connected thus narrowing down the search space.  The former ensures
that neighboring cells contain as few vertex pairs as possible
reducing the number of comparisons. Figure~\ref{fig:cellsa} shows an example of
such a grid for a $2$-dimensional ground space.

\subsection{Inhomogeneous Weights}
\label{sec:appr1}

\begin{figure}
  \centering
  \begin{subfigure}[t]{0.19\textwidth}
    \centering
    \includegraphics{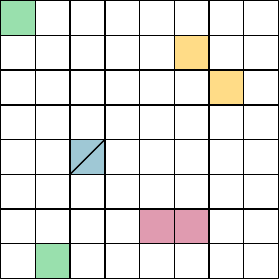}
    \caption{\centering}
    \label{fig:cellsa}
  \end{subfigure}
  \quad
  \begin{subfigure}[t]{0.19\textwidth}
    \centering
    \includegraphics{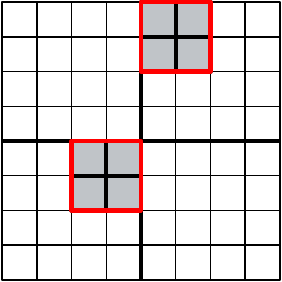}
    \caption{\centering}
    \label{fig:cellsb}
  \end{subfigure}
  \quad
  \hspace{3pt}
  \begin{subfigure}[t]{0.45\textwidth}
    \centering
    \includegraphics{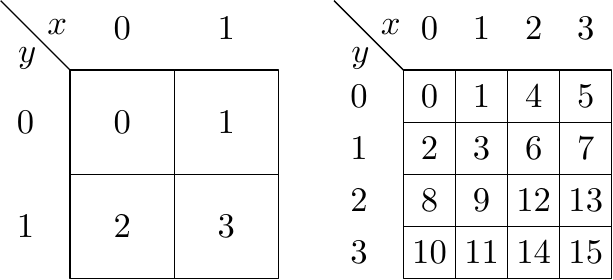}
    \caption{\centering}
    \label{fig:cell-linearization}
  \end{subfigure}
  \vspace{-2pt}
  \caption{\textbf{(\subref{fig:cellsa}),(\subref{fig:cellsb})}~The grid used by weight
    bucket pairs with a connection probability threshold between
    $2^{-3}$ and $2^{-4}$ in two dimensions. \textbf{(\subref{fig:cellsa})} Each pair of colored
    cells represent neighbors.  Note that the ground space is a torus
    and a cell is also a neighbor to itself. \textbf{(\subref{fig:cellsb})} The eight gray cells
    represent multiple distant cell pairs, which are replaced by one
    pair consisting of the red outlined parent cell pair.
    \textbf{(\subref{fig:cell-linearization})}~Linearization of the
    cells on level 1 (left) and 2 (right) for $d = 2$.}
\end{figure}

Assume that we have vertices with two different weights $w_1, w_2$,
rather than one.  As before, the cells should still be as small as
possible while having a diameter larger than the connection threshold.
However, there are three different thresholds now, one for each
combination of weights.  To resolve this, we can group the vertices by
weight and use three differently sized grids to find the edges between
them.

As GIRGs require not only two but many weights, considering one grid
for every weight pair is infeasible.  The solution is to discretize
the weights by grouping ranges of weights into \emph{weight buckets}.
When searching for edges between vertices in two weight buckets, the
pair of largest weights in these buckets provides the threshold for
the cell diameter.  This choice of the cell diameter satisfies
property~(\ref{item:cells-of-suff-diam}).
Property~(\ref{item:small-cells}) is violated only slightly, if the
weight range within the bucket is not too large.  Thus, each
combination of two weight buckets uses a grid of cells, whose
granularity is based on the maximum weight in the respective buckets.

There is a tradeoff when choosing the number of weight buckets.
Logarithmically many buckets yield a sublinear number of grids.
Moreover, the largest and smallest weight in a bucket are at most a
factor~$2$ apart.  Thus, the diameter of a cell is too large by a
factor~$\le 4$.

With this approach, a single vertex has to appear in grids of
different granularity.  To do this in an efficient manner, we
recursively divide the space into ever smaller grid cells, leading to
a hierarchical subdivision of the space.  This hierarchy is naturally
described by a tree.  For a $2$-dimensional ground space, each node has
four children, which is why we call it \emph{quadtree}.  Note that
each level of the quadtree represents a grid of different granularity.
Moreover, the side length of a grid cell on level $\ell$ is
$2^{-\ell}$.  For a pair $(i, j)$ of weight buckets, we then choose
the level that fits best for the corresponding weights, i.e., the
deepest level such that the diameter of each grid cell is above the
connection threshold for the largest weights in bucket $i$ and $j$,
respectively.  We call this level the \emph{comparison level}, denoted
by $\comparisonLevel{i}{j}$.  It suffices to insert vertices of a
bucket into the deepest level among all its comparison levels.  This
level is called the \emph{insertion level} and we denote it by
$\insertionLevel{i}$.  In Section~\ref{sec:cellaccess}, we discuss in
detail how to efficiently access all vertices in a given grid cell
belonging to a given weight bucket.

\subsection{Binomial Variant of the Model}
\label{sec:appr2}

For $T > 0$, neighboring cell pairs are still easy to handle: a
constant fraction of vertex pairs will have an edge and one can sample
them by explicitly checking every pair.  For distant cell pairs and a
fixed pair of weight buckets, the distance between the cells yields an
upper bound on the connection probability of included vertices; see
Equation~(\ref{eq:girg-binom}).  The probability bound depends on
both, the weight buckets and the cell pair distance, using the maximum
weight within the buckets and the minimum distance between points in
the cells.  We note that, the individual connection probabilities are
only a constant factor smaller than the upper bound.

Knowing this, we can use geometric jumps to skip most vertex pairs
\cite{ad-srs-85}.  The approach works as follows.  Assume that we want
to create an edge with probability $\overline{p}$ for each vertex
pair.  For this process, we define the random variable $X$ to be the
number of vertex pairs we see until we add the next edge.  Then $X$
follows a geometric distribution.  Thus, instead of throwing a coin
for each vertex pair, we can do a single experiment that samples $X$
from the geometric distribution and then skip $X$ vertex pairs ahead.
Since not all vertex pairs reach the upper bound $\overline{p}$, we
accept encountered pairs with probability $p_{uv}/\overline{p}$ to get
correct results.

Although distant cell pairs are handled efficiently, their number is
still quadratic, most of which yield no edges.  To circumvent this
problem, the sampling algorithm, yet again, uses a quadtree.  In the
quadratic set of cell pairs to compare for one weight bucket pair,
non-neighboring cells are grouped together along the quadtree
hierarchy.  They are replaced by their parents as shown in
Figure~\ref{fig:cellsb}.  The grouping of distant cell pairs is done as
much as possible, meaning as long as the parents are not neighbors.

In conclusion, for each pair of weight buckets $(i,j)$ the following two
types of cell pairs have to be processed.  Any two neighboring cell
pairs on the comparison level $\comparisonLevel{i}{j}$; and any
distant cell pair with level larger or equal $\comparisonLevel{i}{j}$
that has neighboring parents.

\subsection{Efficiently Iterating Over Cell Pairs}
\label{sec:loopswap}

The previous description sketches the algorithm as originally
published.  Here, we propose a refactoring that greatly simplifies
the implementation and enables parallelization.  We attribute a
significant amount of \hypergirgs' speed up over \embedder to this
change.

Instead of first iterating over all bucket pairs and then over all
corresponding cell pairs, we reverse this order.  This removes the
need to repeatedly determine the cell pairs to process for a given
bucket pair.  Instead it suffices to find the bucket pairs that
process a given cell pair.  This only depends on the level of the two
cells and their type (neighboring or distant).  Inverting the mapping
from bucket pairs to cell pairs in the previous section yields the
following.  A neighboring cell pair on level $\ell$ is processed for
bucket pairs with a comparison level of exactly $\ell$.  A distant
cell pair on level $\ell$ (with neighboring parents) is processed for
bucket pairs with a comparison level larger than or equal to $\ell$.
Thus, for each level of the quadtree we must enumerate all neighboring
cell pairs, as well as distant cell pairs with neighboring parents.
Algorithm~\ref{alg:recursive} recursively enumerates exactly these
cell pairs.

\begin{algorithm2e}
\DontPrintSemicolon
\caption{Sample GIRG by Recursive Iteration of Cell Pairs}
\label{alg:recursive}
\KwIn{cell pair (A,B)}
\SetKwFunction{recur}{recur}
\ForAll{bucket pairs $(i,j)$ that process the cell pair $(A,B)$}{
    \eIf{A and B are neighbors}{
        emit each edge $(u,v) \in V_i^A \times V_j^B$ with probability $p_{uv}$\;
    }{
        choose candidates $S\subseteq V_i^A \times V_j^B$ using geometric jumps and $\overline{p}$\;
        emit each edge $(u,v)\in S$ with probability $p_{uv}/\overline{p}$\;
    }
}
\If{A and B are neighbors \emph{\textbf{and not}} maximum depth reached}{
	\ForAll{children $X$ of $A$}{
		\ForAll{children $Y$ of $B$}{
			\recur{X,Y}
		}
	}
}
\end{algorithm2e}

\subsection{Efficient Access to Vertices by Bucket and Cell}
\label{sec:cellaccess}

A crucial part of the algorithm is to quickly access the set of
vertices restricted to a weight bucket $i$ and a cell $A$, which we
denote by $V_i^A$.  To this end, we linearize the cells of each level
as illustrated in Figure~\ref{fig:cell-linearization}.  This
linearization is called Morton code~\cite{m-cogdb-66} or z-order
curve~\cite{om-cdsas-84}.  It has the nice properties that
\begin{inparaenum}[(I)]
\item for each cell in level $\ell$, its descendants in level
  $\ell' > \ell$ in the quadtree appear consecutively; and
\item it is easy to convert between a cells position in the linear
  order and its $d$-dimensional coordinates (see
  Section~\ref{sec:morton}).
\end{inparaenum}

We sort the vertices of a fixed weight bucket $i$ by the Morton code
of their containing cell on the insertion level $\insertionLevel{i}$,
using arbitrary tie-breaking for vertices in the same cell.  This has
the effect that for any cell $A$ with
$\level(A) \le \insertionLevel{i}$, the vertices of $V_i^A$ appear
consecutive.  Thus, to efficiently enumerate them, it suffices to know
for each cell $A$ the index of the first vertex in $V_i^A$.  This can
be precomputed using prefix sums leading to the following lemma.


\wormhole{lemma-access}
\begin{lemma}
\label{lemma:access}
After linear preprocessing, for all cells $A$ and weight buckets $i$
with $\level(A) \leq \insertionLevel{i}$, vertices in the set $V_i^A$
can be enumerated in $\mathcal{O}(|V_i^A|)$.
\end{lemma}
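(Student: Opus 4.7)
The plan is to reduce the enumeration problem, using the two properties of the Morton code stated in the excerpt, to scanning a contiguous range in a presorted array. The properties I rely on---both established in Section~\ref{sec:morton}---are that (i) the descendants of a cell on any deeper level occupy a contiguous block of Morton codes, and (ii) translating between a cell's coordinates and its Morton code takes $O(d) = O(1)$ time for fixed $d$.

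For the preprocessing I would, for each weight bucket $i$, compute for every vertex $v$ of the bucket the Morton code at level $\insertionLevel{i}$ of the cell containing $x_v$, and then sort the vertices of that bucket by this integer key. Radix sort costs $O(n_i)$ per bucket, hence $O(n)$ in total. A single sweep of each sorted list then suffices to tabulate, for every occupied cell on every level $\ell \le \insertionLevel{i}$, the index at which its vertices begin in the sorted array; pushing these starts upward through the quadtree is exactly the prefix-sum computation mentioned in the excerpt. Since only populated ancestor cells receive entries, the total table size is bounded by the number of vertex insertions and stays $O(n)$.

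To enumerate $V_i^A$ for a query $(A,i)$ with $\level(A) \le \insertionLevel{i}$, I apply property~(ii) to obtain $A$'s Morton code in constant time and look up the precomputed starting index of $V_i^A$ in the bucket-$i$ table. Property~(i) guarantees that the vertices of $V_i^A$ then form a contiguous block in the sorted array, whose length is the difference to the next recorded start. Outputting the block costs $\Theta(|V_i^A|)$; the constant-time lookups are absorbed into the output whenever $V_i^A$ is non-empty, and empty queries are detected and skipped in $O(1)$.

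The main obstacle is making the lookup truly $O(1)$ in linear space. Because insertion levels can be as deep as $\Theta(\log n)$, a dense array indexed by all cells on all relevant levels would be too large, so one must index only the occupied cells---via a flat hash table keyed by Morton code, or, exploiting the systematic order in which Algorithm~\ref{alg:recursive} visits cells, by a pointer that advances monotonically through each bucket's sorted list. Either option resolves the access cost without disturbing the linear preprocessing bound.
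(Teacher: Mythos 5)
Your core construction---sort each bucket by the Morton code of the containing cell at the insertion level, exploit that descendants of any ancestor cell form a contiguous Morton range, and answer queries by reading off a precomputed interval---is exactly the paper's proof. Where you go astray is in the bookkeeping for the lookup, and the root cause is a missing fact: Bringmann et al.\ show that $\sum_i 2^{d\cdot\insertionLevel{i}} = \mathcal{O}(n)$, i.e., the total number of cells on the insertion levels, summed over all weight buckets, is linear. Given this, there is no space problem to work around. The paper stores a \emph{dense} prefix-sum array $P_C$ indexed by all cells $C$ of level $\insertionLevel{i}$ (occupied or not), which costs $\mathcal{O}(2^{d\cdot\insertionLevel{i}})$ per bucket and hence $\mathcal{O}(n)$ overall; a query for a shallower cell $A$ needs no per-level table at all, since its first descendant $C_1$ and the successor $C_{j+1}$ of its last descendant at level $\insertionLevel{i}$ are obtained in constant time from $A$'s Morton code (append the appropriate bits), and $V_i^A$ is the range $[P_{C_1}, P_{C_{j+1}})$. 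Your proposed alternatives are both weaker: a hash table only gives expected-time lookups and adds machinery the dense array makes unnecessary, and the ``monotonically advancing pointer'' quietly changes the lemma into a statement about queries arriving in a fixed order rather than arbitrary $(A,i)$ queries. Separately, your justification that the multi-level table of occupied cells has size ``bounded by the number of vertex insertions'' is not right as stated---a bucket with $n_i$ vertices can populate $\Theta(n_i \cdot \insertionLevel{i})$ distinct ancestor cells across the levels; the table is nonetheless linear in total, but only because the geometric sum $\sum_{\ell \le \insertionLevel{i}} 2^{d\ell} = \mathcal{O}(2^{d\cdot\insertionLevel{i}})$ combines with the Bringmann et al.\ bound above, which is the argument you would need to supply.
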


\section{Implementation Details}
\label{sec:impl}

The description in the previous section is an idealized version of the
algorithm.  For an actual implementation, there are some gaps to fill
in.  Moreover, omitting many minor tweaks, we want to sketch
optimizations that are crucial to achieve a good practical run time in
the following.  More details on the sketched approaches can be found
in Appendix~\ref{sec:impl-deta}.

\subsection{Estimating the Average Degree Parameter}
\label{sec:constants}

Here, we sketch how to estimate the parameter $c$ in
Eq.~(\ref{eq:girg-binom}) to achieve a given expected average degree.
We estimate the constant based on the actual weights, not on their
probability distribution.  This leads to lower variance and allows
user-defined weights.

We start with an arbitrary constant $c$, calculate the resulting
expected average degree $\mathbb E[\bar{d}]$ and adjust $c$
accordingly, using a modified binary search.  This is possible, as
$\mathbb E[\bar{d}]$ is monotone in $c$.  We derive an exact formula
for $\mathbb E[\bar{d}]$, depending on $c$ and the weights.  It cannot
simply be solved for $c$, which is why we use binary search instead of
a closed expression.

For the binary search, we need to evaluate $\mathbb E[\bar{d}]$ for
different values of $c$.  This is potentially problematic, as the
formula for $\mathbb E[\bar{d}]$ sums over all vertex pairs.  The
issue preventing us from simplifying this formula is the minimum in
the connection probability.  We solve this, by first ignoring the
minimum and subtracting an error term for those vertex pairs, where
the minimum takes effect.  The remaining hard part is to calculate
this error term.  Let $E_R$ be the set of vertex pairs appearing in
the error term and let $R$ be the set of vertices with at least one
partner in $E_R$.  Although $|E_R|$ itself is sufficiently small, $R$
is too large to determine $E_R$ by iterating over all pairs in
$R \times R$.  We solve this by iterating over the vertices in $R$,
sorted by weight.  Then, for each vertex we encounter, the set of
partners in $E_R$ is a superset of the partners of the previous vertex
(with smaller weight).

\subsection{Efficiently Encoding and Decoding Morton Codes}
\label{sec:morton}

Recall from Section~\ref{sec:cellaccess} that we linearize the
$d$-dimensional grid of cells using Morton code.  As vertex positions
are given as $d$-dimensional coordinates, we have to convert the
coordinates to Morton codes (i.e., the index in the linearization) and
vice versa.  This is done by bitwise interleaving the coordinates.
For example, the $2$-dimensional Morton code of the four-bit coordinates
$a=a_3a_2a_1a_0$ and $b=b_3b_2b_1b_0$ is $a_3b_3a_2b_2a_1b_1a_0b_0$.
We evaluated different encoding and decoding approaches via micro
benchmarks.  The fastest approach, at least on Intel processors, was
an assembler instruction from BMI2 proposed by Intel in
2013~\cite{intel-manual-19}.

\subsection{Generating HRGs Avoiding Expensive Mathematical Operations}
\label{sec:nomath}

The algorithm from Section~\ref{sec:algorithm} can be used to generate HRGs. The
algorithm works conceptually the same, except that most formulas
change.  This has for example the effect that we no longer get a
closed formula to determine the insertion level of a weight bucket or
the comparison level of a bucket pair.  Instead, one has to search
them, by iterating over the levels of the quadtree.  Furthermore, HRGs 
introduce many computationally expensive mathematical operations 
like the hyperbolic cosine.  This can be mitigated as follows.

For the threshold model, an edge exists if the distance $d$ is smaller
than $R$.  Considering how the hyperbolic distance is defined
(Section~\ref{sec:hrgmodel}), reformulating it to
$\cosh(d) < \cosh(R)$ avoids the expensive $\acosh$, while $\cosh(R)$
remains constant during execution and can thus be precomputed.
Similar to recent threshold HRG generators, we compute intermediate
values per vertex such that $\cosh(d)$ can be computed using only
multiplication and addition~\cite{flm-cfmdgg-17,p-gprhg-17}.

For the binomial model, evaluating the connection probability is a
performance bottleneck.  The straightforward way to sample edges is:
compute the connection probability $p_T(d)$ depending on the distance,
sample a uniform random value $u \in [0,1]$, and create the edge if
and only if $u < p_T(d)$.  We can improve this by precomputing the
inverse of $p_T(d)$ for equidistant values in $[0, 1]$.  This lets us,
for small ranges in $[0, 1]$, quickly access the corresponding range
of distances.  Changing the order, we first sample $u \in [0, 1]$, which falls in a range
between two precomputed values, which in turn yields a range of
distances.  If the actual distance lies below that range, there has to
be an edge and if it lies above, there is no edge.  Only if it lies in
the range, we actually have to compute the probability $p_T(d)$.

\subsection{Parallelization}
\label{sec:parallel}

The algorithm has five steps: generate weights, generate positions,
estimate the average degree constant, precompute the geometric data
structure, and sample edges.  The first two are trivial to
parallelize.  For estimating the constants, we parallelize the
dominant computations with linear running time.
To sample the edges, we make use of the fact that we iterate over
cell pairs in a recursive manner.  This can be parallelized by cutting
the recursion tree at a certain level and distributing the loose ends
among multiple processors.

For the preprocessing we have to do three subtasks: compute for each
vertex its containing cells on its insertion level, sort the vertices
according to their Morton code index, and compute the prefix sum for
all cells.  We parallelize all three tasks and optimize them by
handling all weight buckets together, sorting by weight bucket first
and Morton code second.  This is done by encoding this criterion into
integers that are sorted with parallel radix sort.



\section{Experimental Evaluation}
\label{sec:experiments}

We perform three types of experiments.  In
Section~\ref{sec:girgexperiments} we investigate the scaling behavior
of our GIRG generator, broken down into the different tasks performed
by the algorithm.  In Section~\ref{sec:hrgexperiments} we compare our
HRG generator with existing generators.  In
Section~\ref{sec:hrgisgirgexperiments} we experimentally investigate
the difference between HRGs and their GIRG counterpart.
Whenever a data point represents the mean over multiple iterations,
our plots include error-bars that indicate the standard deviation.
Besides the implementation itself, all benchmarks and analysis scripts
are also accessible in our source repository.

\subsection{Scaling of the GIRG Generator}
\label{sec:girgexperiments}

We investigate the scaling of the generator, broken down into five
steps.
\begin{inparaenum}[1.]
\item \textbf{(Weights)}~Generate power-law weights.
\item \textbf{(Positions)} Generate points on~$\mathbb{T}^d$.
\item \textbf{(Binary)} Estimate the constant controlling the average
  degree.
\item \textbf{(Pre)} Preprocess the geometric data structure
  (Section~\ref{sec:cellaccess}).
\item \textbf{(Edges)} Sample edges between all vertex pairs as described in
  Algorithm~\ref{alg:recursive}.
\end{inparaenum}

\begin{figure}
  \centering
  \begin{subfigure}[t]{0.44\textwidth}
    \centering
    \includegraphics[width=\textwidth]{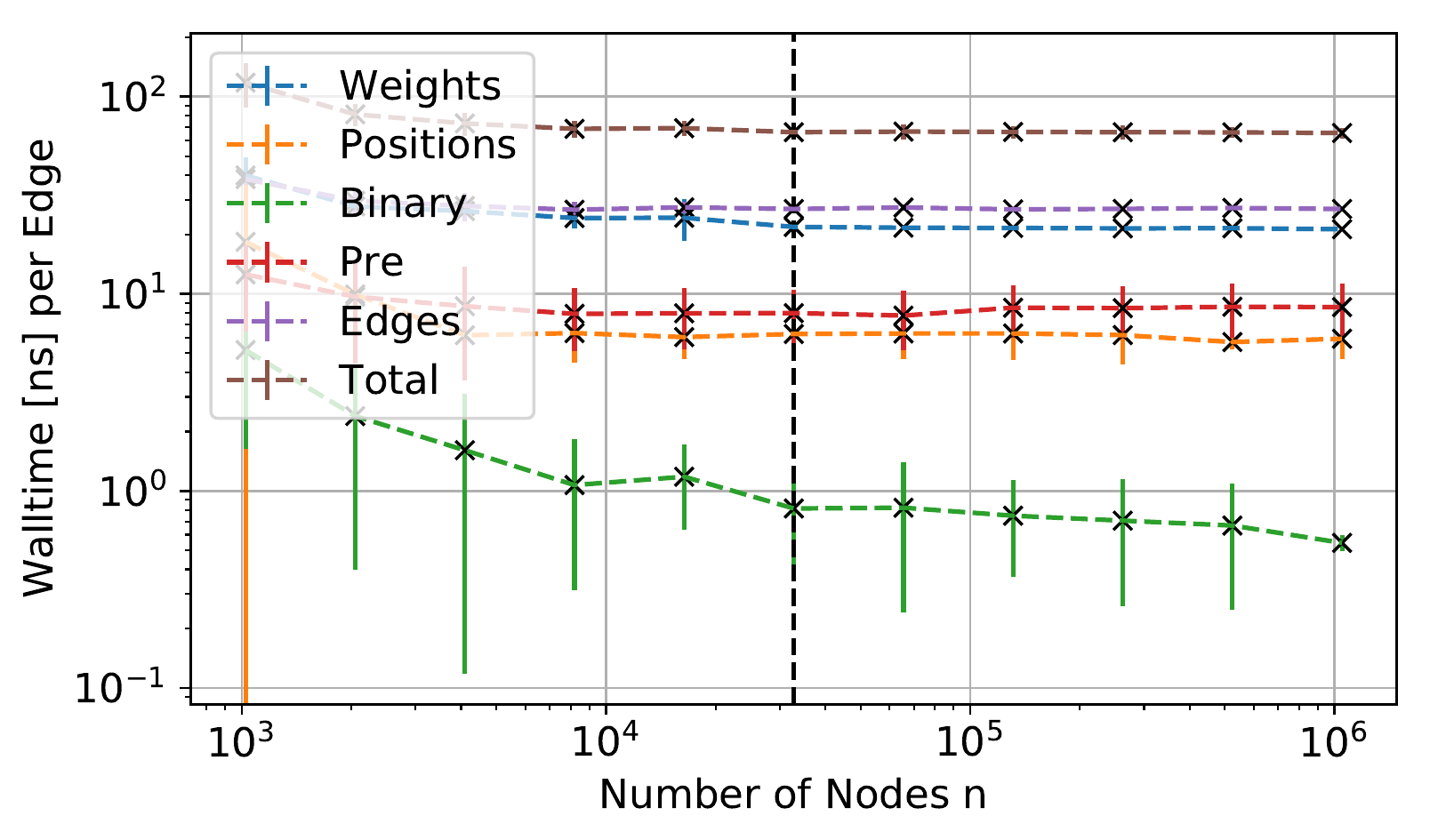}
  \end{subfigure}
  \quad
  \begin{subfigure}[t]{0.44\textwidth}
    \centering
    \includegraphics[width=\textwidth]{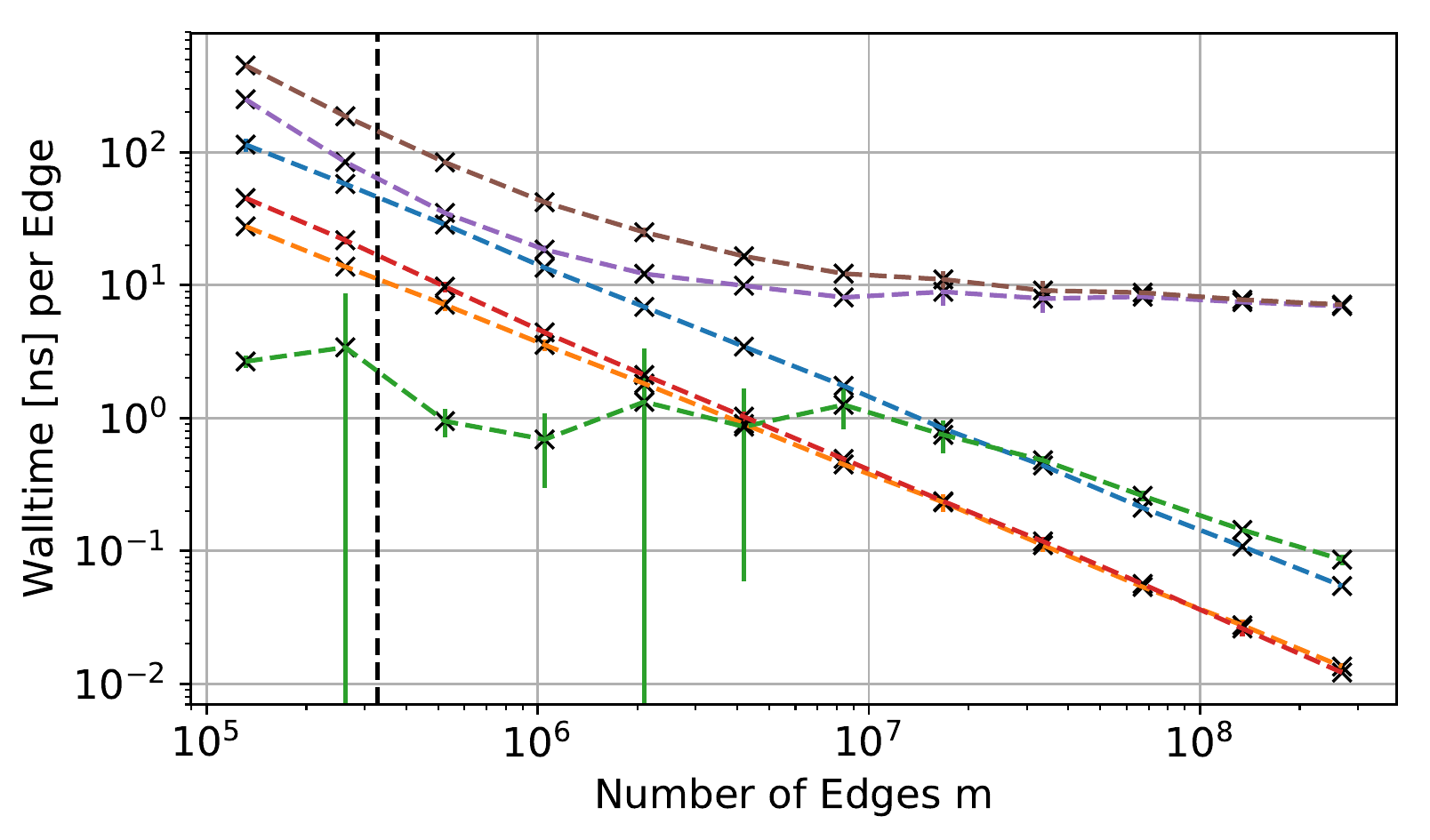}
  \end{subfigure}
  \begin{subfigure}[t]{0.44\textwidth}
    \centering
    \includegraphics[width=\textwidth]{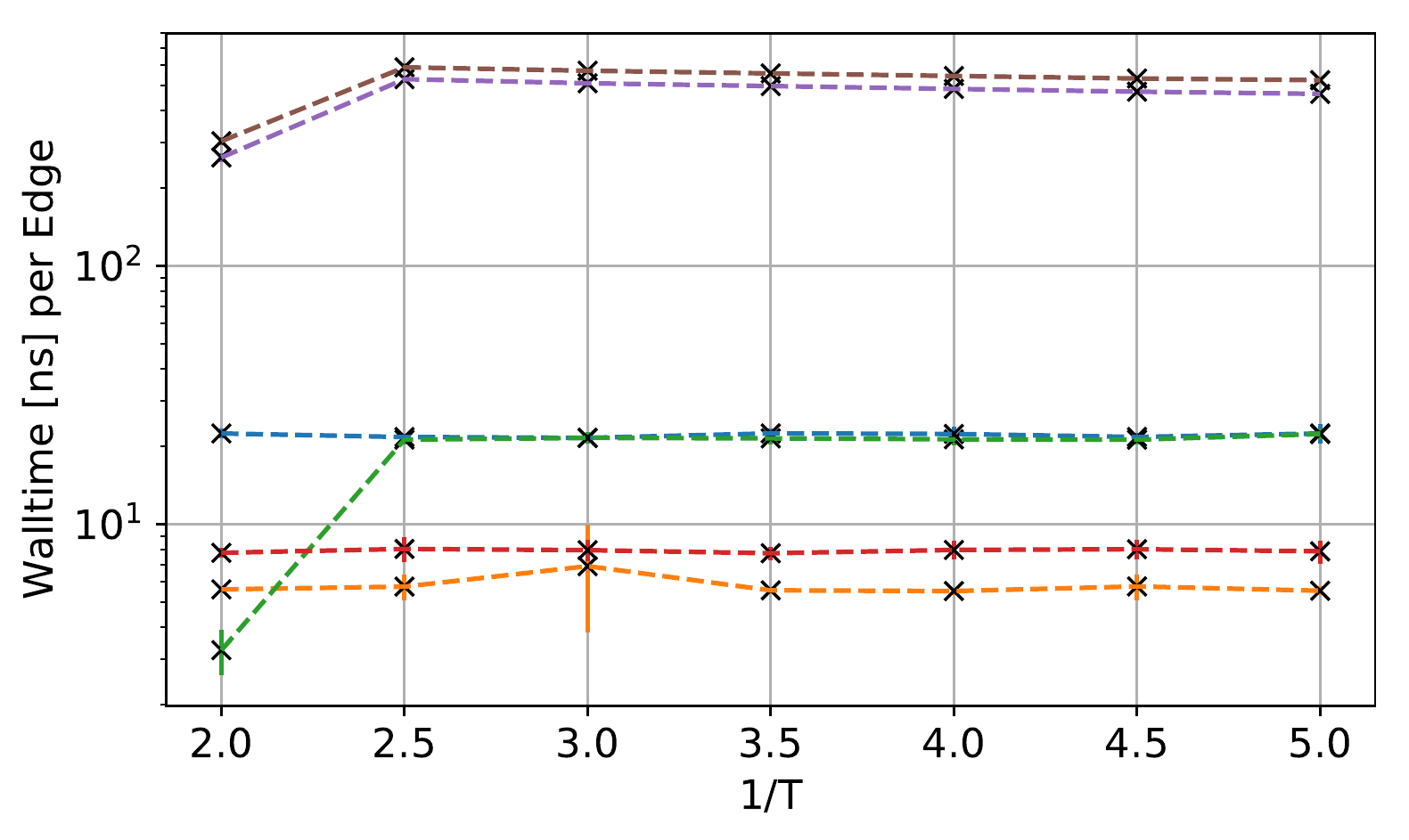}
  \end{subfigure}
  \quad
  \begin{subfigure}[t]{0.44\textwidth}
    \centering
    \includegraphics[width=\textwidth]{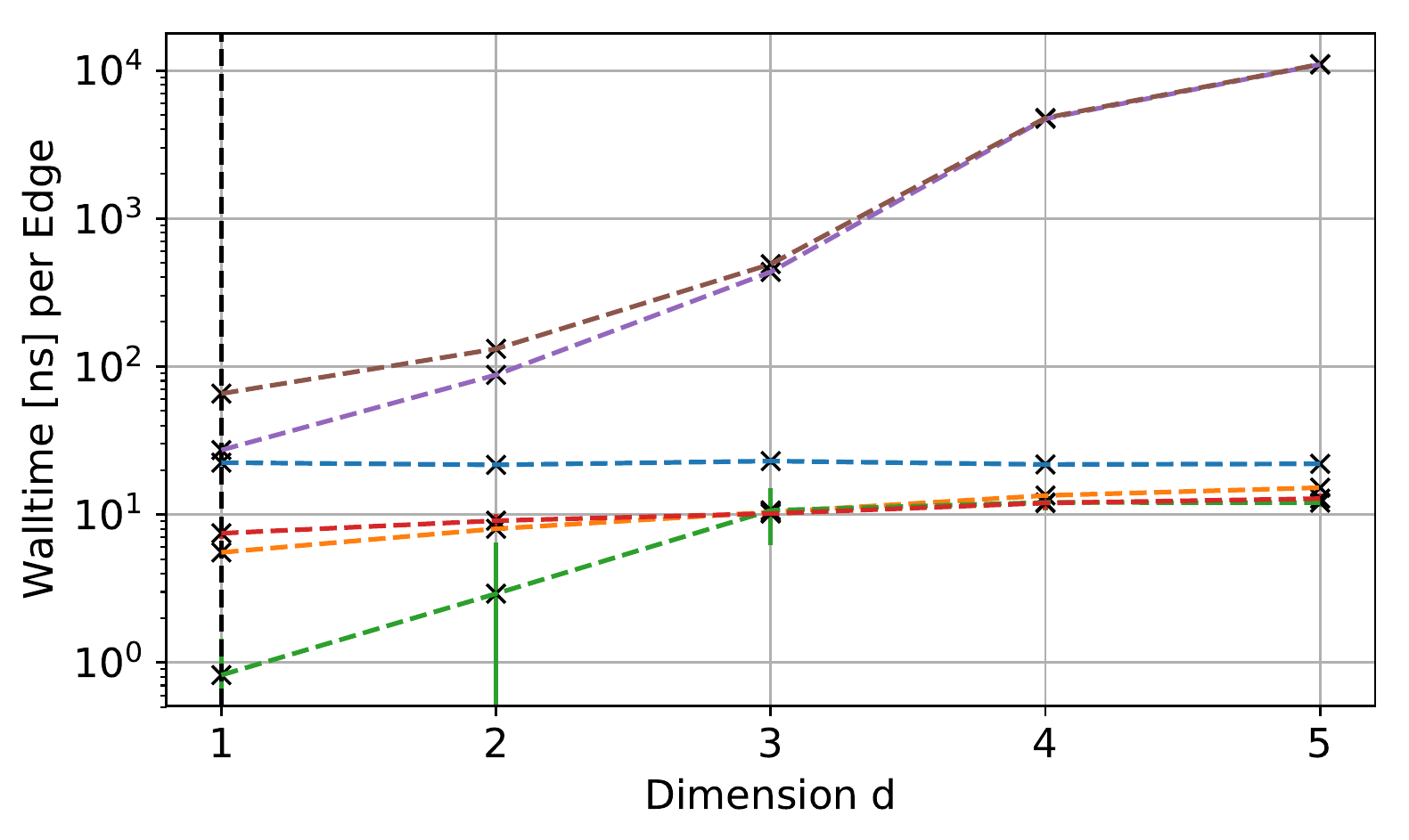}
  \end{subfigure}
  \caption{Run time for the steps of the GIRG sampling algorithm averaged over 10 iterations. 
    Each plot varies a different model parameter deviating from a fixed base configuration with $d=1$, $n=2^{15}$, $T=0$, $\beta=2.5$, and $\bar d=10$.
    The base configuration is indicated by a dashed vertical line.  
  }
  \label{fig:girgeval}
\end{figure}

Figure~\ref{fig:girgeval} shows the sequential run time over the
number of nodes $n$ (top left), number of edges $m$ (top right),
temperature $T$ (bottom right), and dimension $d$ (bottom right).  The
performance is measured in nanoseconds per edge.  Each data point
represents the mean over 10 iterations.  To make the measurements
independent of the graph representation, we do not save the edges into
RAM, but accumulate a checksum instead.  Note that the top right plot
increases the average degree, resulting in a decreased time per edge.

The empirical run times match the theoretical bounds: it is linear in
$n$ and $m$, grows exponentially in the dimension $d$, and is
unaffected by the temperature $T$.  The overall time is dominated by
the edge sampling.  Generating the weights includes expensive
exponential functions, making it the slowest step after edge sampling.
Generating the positions is significantly faster even for higher
dimensions.  For the parameter estimation using binary search, one can
see that the run time never exceeds the time to generate the weights.
For non-zero temperature~$T$ the performance of the binary search is
similar to the generation of the weights, as it also requires
exponential functions.  The lower run times per edge for the
increasing number of edges (top right) show that the run time is
dominated by the number of nodes $n$.  Only for very high average
degrees, the cost per edge outgrows the cost per vertex.

\subsection{HRG Run Time Comparison}
\label{sec:hrgexperiments}

We evaluate the run time performance of \hypergirgs compared to the
generators in Table~\ref{tab:related}, excluding the generators with
high asymptotic run time as well as \rhg and \srhg.  \rhg and \srhg
are designed for distributed machines.  Executed on a single compute
node, the performance of the faster \srhg is comparable to
\hypergen~\cite{flm-cfmdgg-17}.
To avoid systematic biases between different graph representations,
the implementations are modified\footnote{%
  The modifications are publicly available and referenced in our
  GitHub repository.}  not to store the resulting graph.  Instead,
only the number of edges produced is counted and we ensure that the
computation of incident nodes is not optimized away by the compiler.

We used different machines for our sequential and parallel
experiments.  The former are done on an \textsl{Intel Core i7-8700K}
with \SI{16}{GB} RAM, the latter on an \textsl{Intel Xeon CPU
  E5-2630~v3} with 8 cores (16 threads) and \SI{64}{GB} RAM.

\begin{figure}[tb]
  \begin{subfigure}[t]{0.44\textwidth}%
    \centering%
    \includegraphics[width=\textwidth]{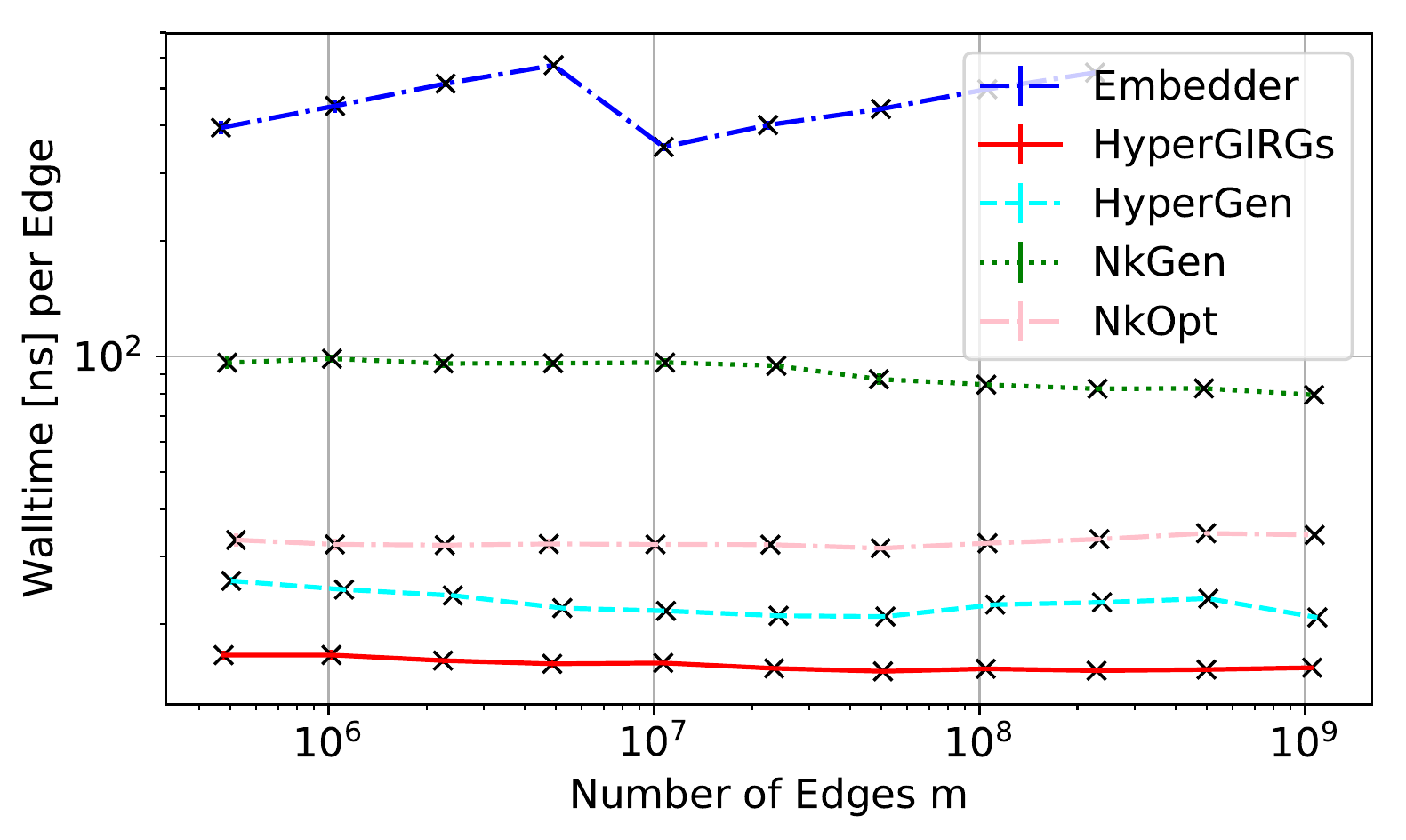}%
    \caption{\centering$\bar d = 100$, $\beta = 2.2$, $T = 0$,
      sequential}%
    \label{subfig:runtime-rhg-22-0}
  \end{subfigure}\hfill
  \begin{subfigure}[t]{0.44\textwidth}%
    \includegraphics[width=\textwidth]{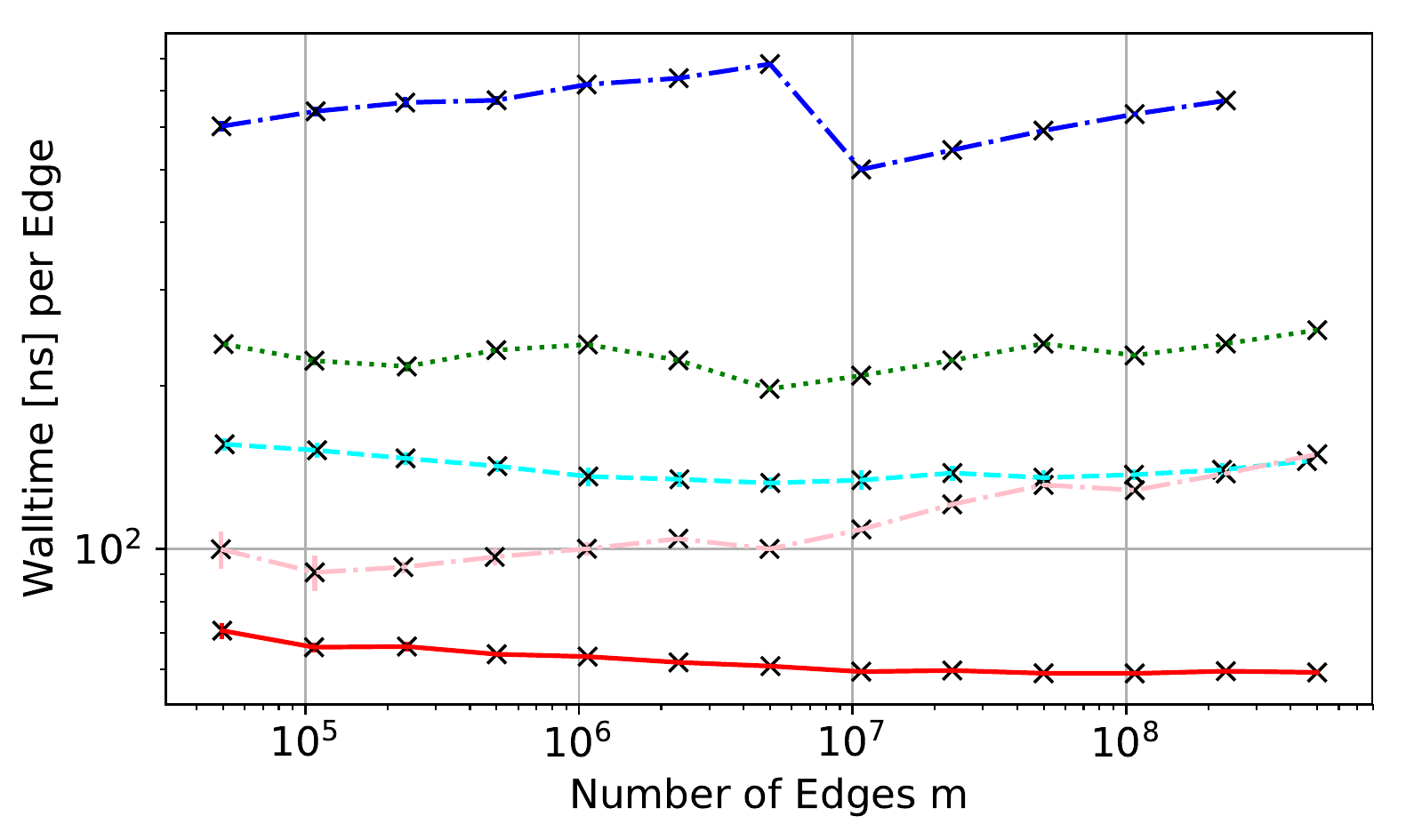}%
    \caption{\centering$\bar d = 10$, $\beta = 3$, $T = 0$,
      sequential}%
    \label{subfig:runtime-rhg-30-0}
  \end{subfigure}
  
  \begin{subfigure}[t]{0.44\textwidth}%
    \includegraphics[width=\textwidth]{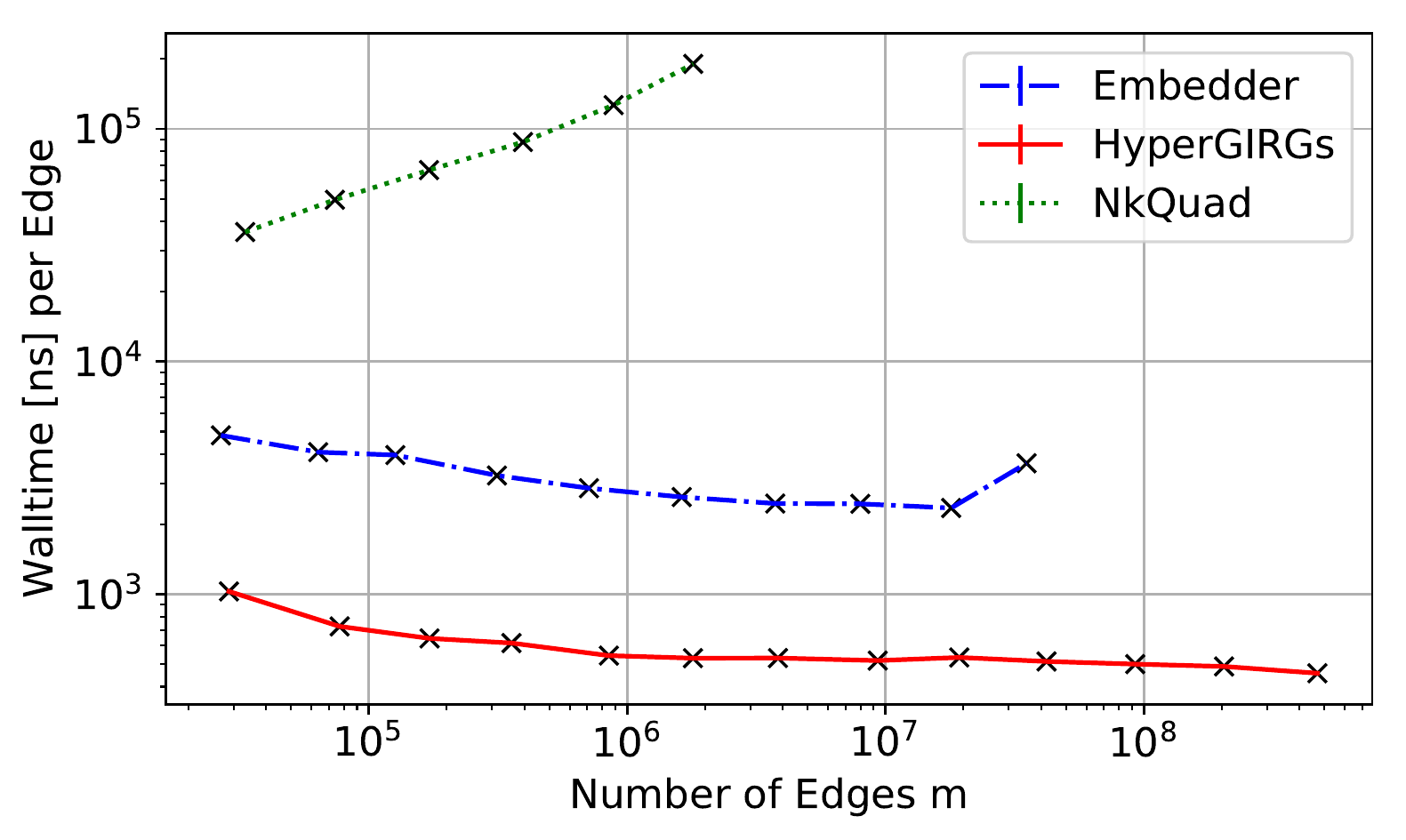}%
    \caption{\centering$\bar d = 10$, $\beta = 2.2$, $T = 0.5$,
      sequential}%
    \label{subfig:runtime-rhg-22-5}
  \end{subfigure}  \hfill
  \begin{subfigure}[t]{0.44\textwidth}%
    \includegraphics[width=\textwidth]{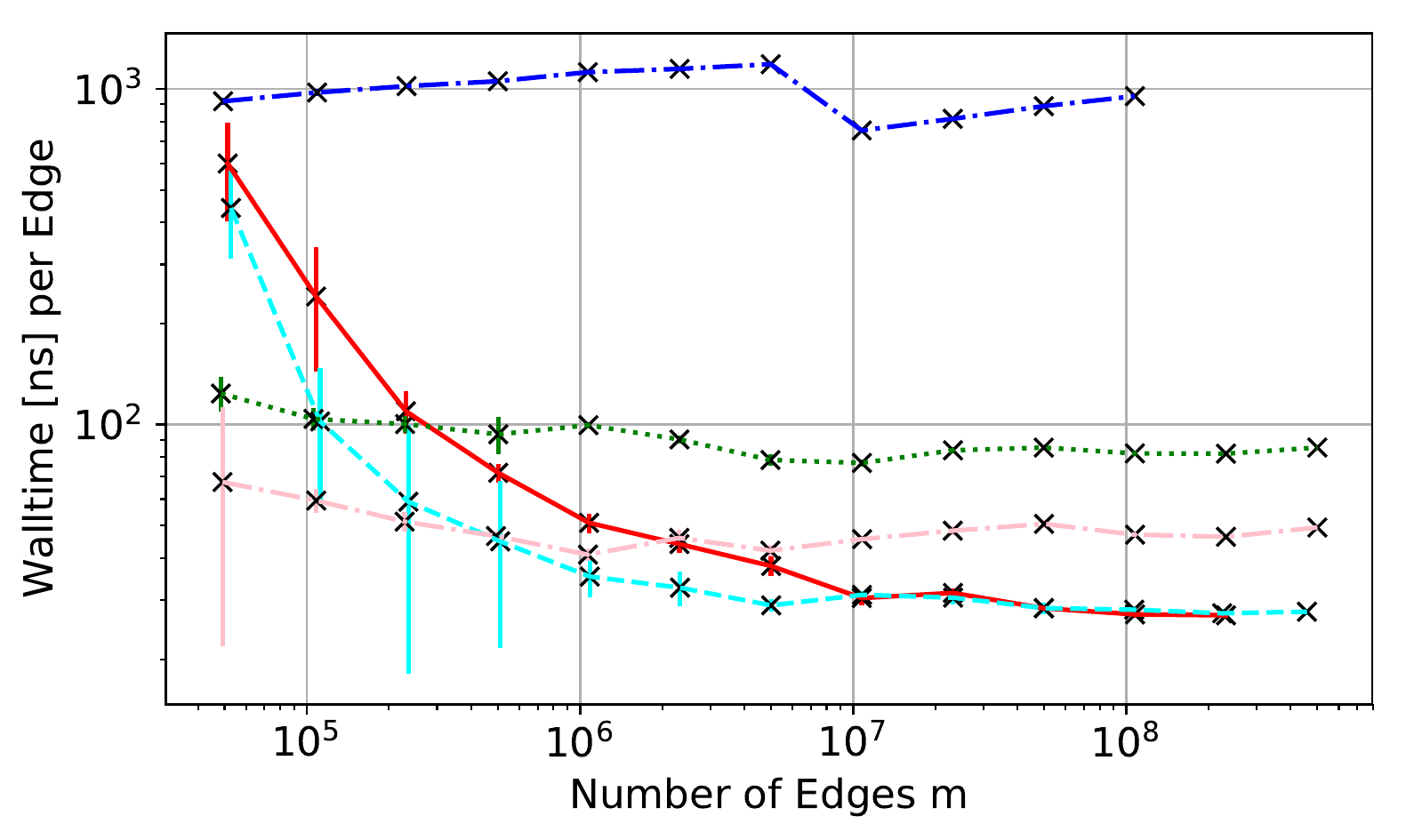}%
    \caption{\centering$\bar d = 10$, $\beta = 3$, $T = 0$, parallel
      (16 threads)}%
    \label{subfig:runtime-rhg-30-0-par}
  \end{subfigure}
  \caption{Comparison of HRG generators averaged over 5 iterations.
    \textbf{(\subref{subfig:runtime-rhg-22-0})},
    \textbf{(\subref{subfig:runtime-rhg-30-0})} Threshold variant for
    different average degrees~$\bar{d}$ and power-law
    exponents~$\beta$.
    \textbf{(\subref{subfig:runtime-rhg-22-5})}~Binomial variant with
    temperature $T = 0.5$.
    \textbf{(\subref{subfig:runtime-rhg-30-0-par})}~The same
    configuration as (\subref{subfig:runtime-rhg-30-0}) but utilizing
    multiple cores.}
  \label{fig:runtime-rhg}
\end{figure}


Our generator \hypergirgs is consistently faster than the competitors,
independent of the parameter choices; see
Figure~\ref{subfig:runtime-rhg-22-0}
and~\ref{subfig:runtime-rhg-30-0}.  Only for unrealistic average
degrees (\SI{1}{k}), \hypergen slightly outperforms \hypergirgs.
Moreover, \hypergirgs beats \embedder, the only other efficient generator
supporting non-zero temperature, by an order of magnitude.


For higher temperatures, we compare our algorithm with the two other
non-quadratic generators \nkquad (included in NetworKit) and
\embedder; see Figure~\ref{subfig:runtime-rhg-22-5}.  We note that
\embedder uses a different estimation for $R$, which leads to an
insignificant left-shift of the corresponding curve.
In Figure~\ref{subfig:runtime-rhg-22-5}, one can clearly see the worse
asymptotic running time of \nkquad.  Compared to \embedder, \hypergirgs
is consistently $4$ times faster.


Figure~\ref{subfig:runtime-rhg-30-0-par} shows measurements for
parallel experiments using 16 threads.  The parameters coincide with
Figure~\ref{subfig:runtime-rhg-30-0}.  \embedder does not support
parallelization and is outperformed even more by the other generators.
The fastest generator in this multi-core setting is \hypergen, which
is specifically tailored towards parallel execution.  Nonetheless,
\hypergirgs shows comparable performance and outperforms the other two
generators \nkgen and \nkopt.  We note that even on parallel machines,
the sequential performance is of high importance: One often needs a
large collection of graphs rather than a single huge instance.  In
this case, it is more efficient to run multiple instances of a
sequential generator in parallel.

\subsection{Difference Between HRGs and GIRGs}
\label{sec:hrgisgirgexperiments}

Recall from Section~\ref{sec:comp-girgs-hrgs} that a HRG with average
degree $\dhrg$ has a corresponding GIRG sub- and supergraphs with
average degrees $\dgirg$ and $\Dgirg$, respectively.

\begin{figure}[tb]
  \centering
  \begin{subfigure}[t]{0.44\textwidth}
    \centering
    \includegraphics[width=\textwidth]{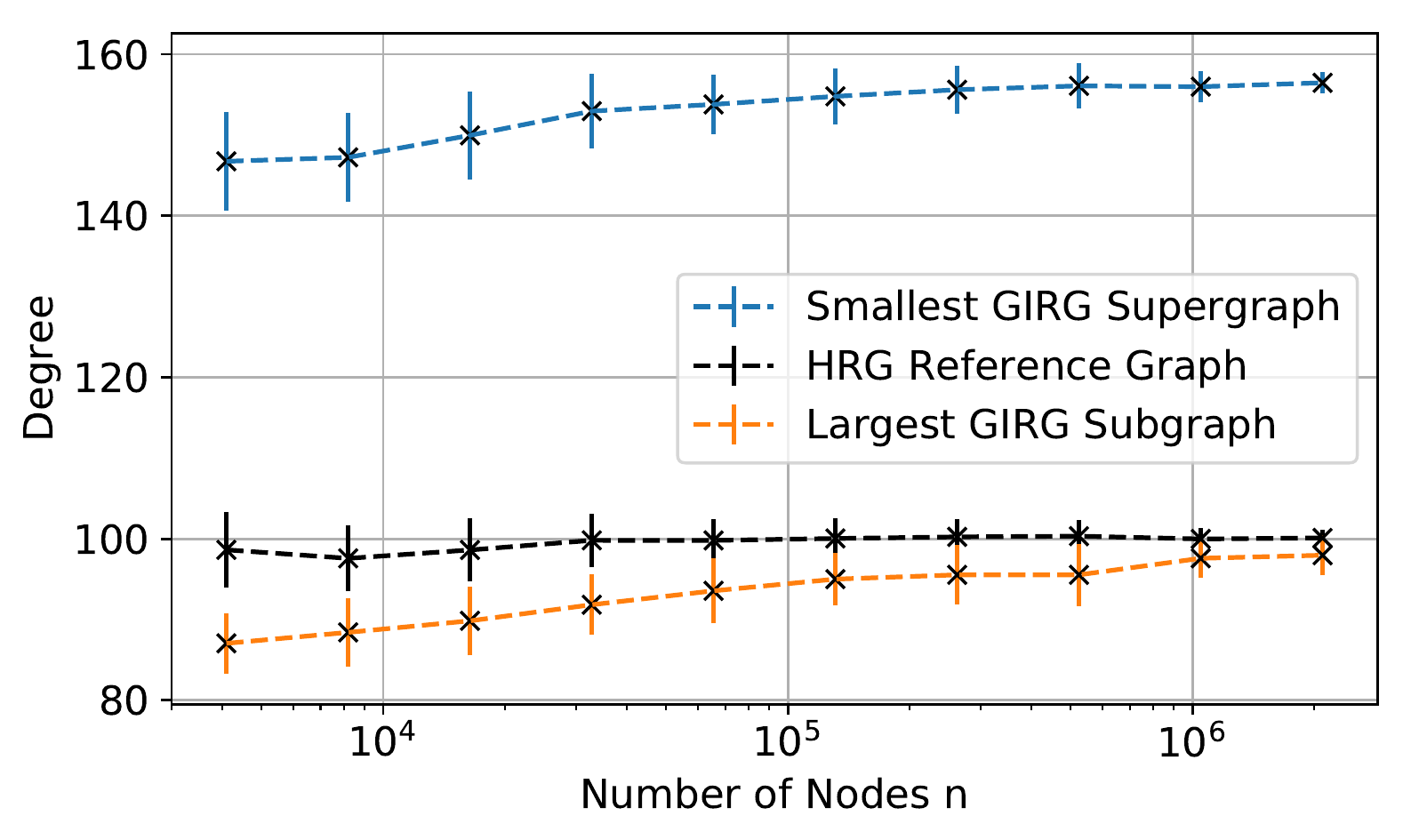}
    \caption{\centering$n \in [2^{12}, 2^{21}]$, $\bar d = 100$,
      $\beta = 2.5$, $T = 0$}
    \label{fig:hrgisgirg-range}
  \end{subfigure}
  \quad
  \begin{subfigure}[t]{0.44\textwidth}
    \centering
    \includegraphics[width=\textwidth]{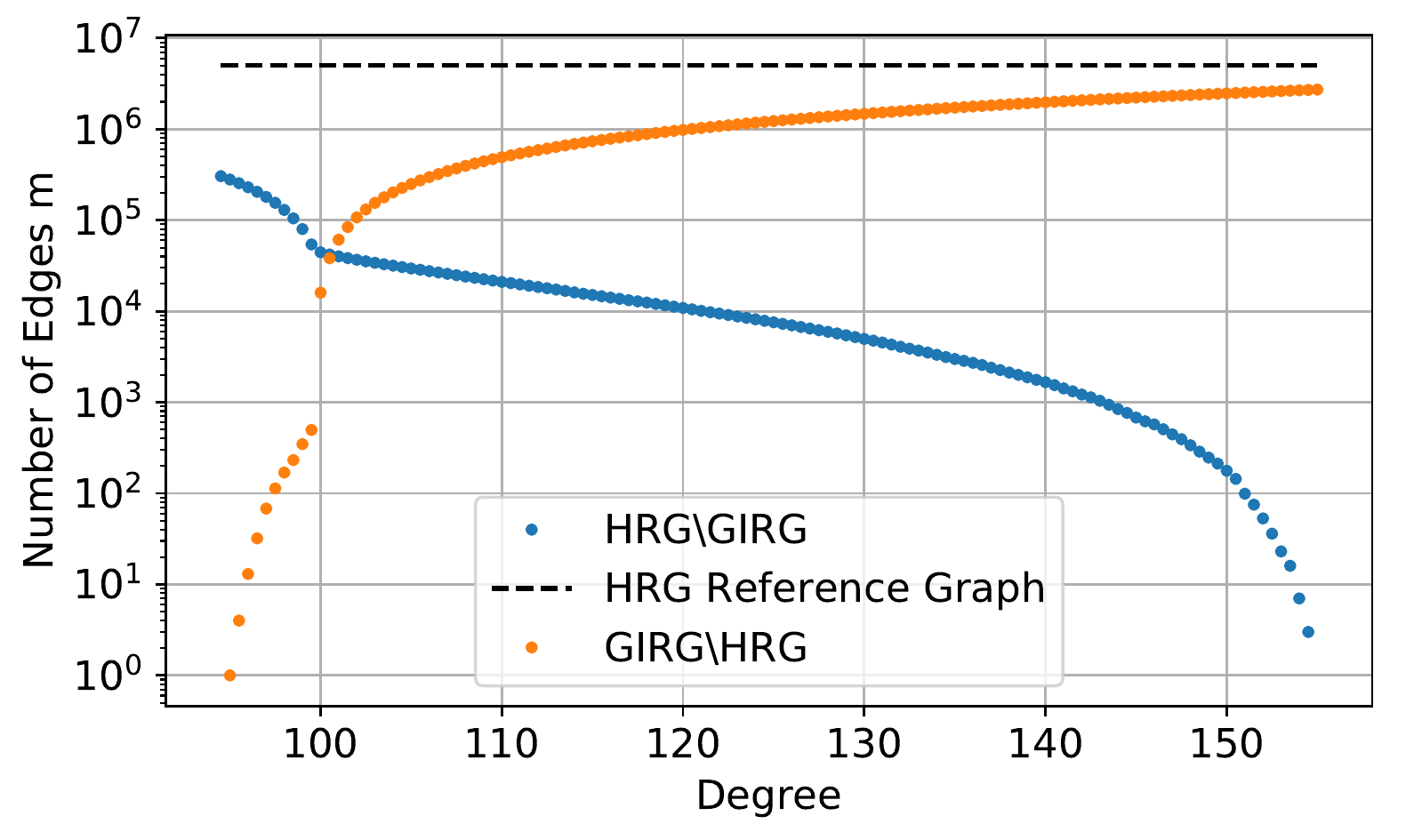}
    \caption{\centering$n = 10^6$, $\bar d = 100$, $\beta = 2.5$,
      $T = 0$}
    \label{fig:hrgisgirg-single}
  \end{subfigure}
  \caption{Relation between the HRG and the GIRG model.
    \textbf{(\subref{fig:hrgisgirg-range})}~The values for $\dhrg$,
    $\dgirg$, $\Dgirg$ averaged over 50 iterations.
    \textbf{(\subref{fig:hrgisgirg-single})}~The number of missing
    ($\text{HRG}\setminus\text{GIRG}$) and additional
    ($\text{GIRG} \setminus \text{HRG}$) edges depending on the
    expected degree of the corresponding GIRG. It can be interpreted
    as a cross-section of one iteration
    in~(\subref{fig:hrgisgirg-range}). }
  \label{fig:hrgisgirg}
\end{figure}

We experimentally determine, for given HRGs, the values for $\dgirg$
by decreasing the average degree of the corresponding GIRGs until it
is a subgraph of the HRG.  Analogously, we determine the value for
$\Dgirg$.  We focus on the threshold variant of the models, as this
makes the coupling between HRGs and GIRGs much simpler (the graph is
uniquely determined by the coordinates).
Figure~\ref{fig:hrgisgirg-range} shows $\dgirg$ and $\Dgirg$, compared
to $\dhrg$ for growing $n$.  One can see that $\dgirg$ and $\Dgirg$
are actually quite far apart.  They in particular do not converge to
the same value for growing $n$.  However, at least $\dgirg$ seems to
approach $\dhrg$.  This indicates that every HRG corresponds to a GIRG
subgraph that is missing only a sublinear fraction of edges.  On the
other hand, the average degree of the GIRG has to be increased by a
lot to actually contain all edges also contained in the HRG.

Figure~\ref{fig:hrgisgirg-single} gives a more detailed view for a
single HRG.  Depending on the average degree of the GIRG, it shows how
many edges the GIRG lacks and how many edges the GIRG has in addition
to the HRG.  For degree~$100$, the GIRG contains about \SI{38}{k}
additional and lacks about \SI{42}{k} edges.  These are rather small
numbers compared to the \SI{50}{M} edges of the graphs.


\newpage
\bibliography{paper}

\newpage
\appendix

\section{Omitted Proof of Lemma~\ref{lemma:access}}
\label{sec:omitted-proof-lemma-1}

\begin{backInTime}{lemma-access}
\begin{lemma}
After linear preprocessing, for all cells $A$ and weight buckets $i$
with $\level(A) \leq \insertionLevel{i}$, vertices in the set $V_i^A$
can be enumerated in $\mathcal{O}(|V_i^A|)$.
\end{lemma}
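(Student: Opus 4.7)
The plan is to precompute, per weight bucket, a Morton-sorted array of vertices together with an index of first occurrences, and to translate a query cell into a contiguous range in this array using the nesting property of the Morton curve.

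For the preprocessing, I would first compute, for each vertex $v$ in bucket $i$, the Morton code $m_i(v)$ of the cell on level $\insertionLevel{i}$ that contains $v$; this is one bit-interleaving per vertex and therefore $\Oh(n)$ in total. Next I would sort the vertices of each bucket by $m_i(v)$ with radix sort, which is linear since the Morton codes are bounded integers of $\Oh(\log n)$ bits. Finally, a single sweep through each sorted list produces an index $F_i$ that maps a Morton code $c$ on level $\insertionLevel{i}$ to the position of the first bucket-$i$ vertex whose code is at least $c$ (or past-the-end if none exists).

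For a query $(A,i)$ with $\level(A)=\ell\leq\insertionLevel{i}$, the defining property of the Morton curve is that the $2^{d(\insertionLevel{i}-\ell)}$ descendants of $A$ on level $\insertionLevel{i}$ form a contiguous block of Morton codes $[m_A\cdot 2^{d(\insertionLevel{i}-\ell)},\,(m_A+1)\cdot 2^{d(\insertionLevel{i}-\ell)})$, where $m_A$ is the Morton code of $A$ on its own level. Consequently $V_i^A$ appears as a contiguous range in the sorted list whose endpoints are obtained by two lookups in $F_i$, and the range is enumerated in $\Oh(|V_i^A|)$.

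The main obstacle will be realizing the index $F_i$ within a total of $\Oh(n)$ space, since for small buckets the insertion level can be so deep that $2^{d\insertionLevel{i}}$ greatly exceeds $n$. I would resolve this by storing $F_i$ sparsely, keeping an entry only for each Morton code that is actually occupied together with the code immediately following each occupied interval, and resolving a lookup either through a perfect-hash table built in linear expected time or by exploiting the recursive cell-pair enumeration of Algorithm~\ref{alg:recursive}, which descends the quadtree level by level so that the relevant entries can be carried along. Either realization preserves $\Oh(1)$ per lookup and $\Oh(n)$ total preprocessing, yielding the claimed bounds.
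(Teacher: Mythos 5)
Your core argument is the same as the paper's: per weight bucket $i$, sort the vertices by the Morton code of their containing cell on level $\insertionLevel{i}$ in linear time, build an index of first occurrences (the paper stores prefix sums $P_C$ over the cells of that level), and use the nesting property of the Morton order --- the descendants of a query cell $A$ occupy the contiguous code block $[m_A\cdot 2^{d(\insertionLevel{i}-\ell)},\,(m_A+1)\cdot 2^{d(\insertionLevel{i}-\ell)})$ --- so that $V_i^A$ is a contiguous range obtained from two index lookups. Where you diverge is your final paragraph, and there the paper's route is both simpler and safer. The worry that a dense index of size $2^{d\insertionLevel{i}}$ per bucket could break linearity is resolved in the paper not by sparsification but by the structural fact, cited from Bringmann et al., that $\sum_i 2^{d\cdot\insertionLevel{i}} = \Oh(|V|)$: the insertion level of a bucket is tied to its maximum weight, so the per-bucket cell counts sum geometrically and the total preprocessing cost $\sum_i \Oh\bigl(|V_i| + 2^{d\cdot\insertionLevel{i}}\bigr)$ is linear. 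Your sparse alternative, as stated, does not quite work: a perfect hash keyed on the \emph{occupied} Morton codes cannot answer a lookup for the boundary code $m_A\cdot 2^{d(\insertionLevel{i}-\ell)}$ when no bucket-$i$ vertex happens to lie in the first (or the one past the last) descendant cell of $A$; you would then need a predecessor/successor query, which is not $\Oh(1)$ in general, and the ``carry the entries along the recursion'' variant needs an explicit amortization argument that you have not supplied (a naive per-level re-partition costs an extra $\log$ factor). Either adopt the dense arrays and invoke the linear bound on the total number of insertion-level cells, or work the sparse realization out in full; as written, the last step of your proof is the only part that does not go through.
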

\begin{proof}
  As mentioned above, we have to sort the vertices $V_i$ of each
  weight bucket $i$ according to the index (Morton code) of the
  containing cell.  Clearly, the $d$-dimensional coordinates of the
  cell containing a given vertex is obtained in constant time by
  rounding.  From this one can obtain the index in constant time (also
  see Section~\ref{sec:morton}).  This can be done using, e.g., bucket
  sort with respect to this index to sort the vertices.  In the
  following, we refer to this sorted array with $V_i$.

  Besides these sorted arrays $V_i$ of vertices, one for each weight
  bucket $i$, we store for each cell $C$ at level $\insertionLevel{i}$
  the number of vertices preceding the vertices in cell $C$.  Note
  that this is simply the prefix sum of the number of vertices in all
  cells that come before cell $C$.  Denote this prefix sum of cell $C$
  with $P_C$.


  Now let $i$ be a weight bucket and let $A$ be a cell identifying the
  requested set of vertices $V_i^A$ (with
  $\level(A) \leq \insertionLevel{i}$).  Let $C_1, \dots, C_j$ be the
  descendants of cell $A$ at level $\insertionLevel{i}$, appearing in
  this order according to the Morton code.  Recall that the vertices
  in $C_1, \dots, C_j$ appear consecutive in the sorted array $V_i$.
  Thus, $V_i^A$ is given by the range $[P_{C_1},\dots, P_{C_{j+1}})$
  in $V_i$.



  In terms of running time, each weight bucket requires
  $\mathcal{O}(|V_i| + 2^{d\cdot\insertionLevel{i}})$ time for bucket
  sort and $\mathcal{O}(2^{d\cdot\insertionLevel{i}})$ time for the
  prefix sums, where $2^{d\cdot\insertionLevel{i}}$ is the number of
  cells in the insertion level $\insertionLevel{i}$.  Over all weight
  buckets, the term $|V_i|$ sums up to $|V|$ and Bringmann et
  al.~\cite{bkl-girg-19} show that the same holds for
  $2^{d\cdot\insertionLevel{i}}$.
\end{proof}
\end{backInTime}

\section{Implementation Details}
\label{sec:impl-deta}

\subsection{Avoiding Double Counting Buckets, Cells, and Vertices}
\label{sec:doublecounting}
The algorithm as described in Section~\ref{sec:algorithm} iterates
over pairs of buckets, cells, and vertices.  All three entities need
to be handled correctly to avoid visiting vertex or cell pairs
multiple times.  Consider the recursive Algorithm~\ref{alg:recursive}.
In lines 8 and 9, it is sufficient to consider only cell pairs with
$A\leq B$, because the pairs $(A,B)$ and $(B,A)$ can be handled
simultaneously.  Meaning, if a cell pair is processed by the bucket
pair $(i, j)$, then it needs to be processed by the bucket pair $(j,
i)$ (cf. line 1).  However, the bucket pair $(i, i)$ should occur once
per cell pair.  Alternatively, one can separate the cell pairs $(A,
B)$ and $(B, A)$, but instead consider only bucket pairs $(i, j)$ with
$i\leq j$.  In any case, bucket pairs $(i, i)$ require special
treatment for cell pairs of the form $(A, A)$.  Then, only edges
between vertices $u<v$ should be checked (lines 3,5-6).  If self loops
are desired, the constraint can be relaxed to $u\leq v$.

\subsection{Estimating the Average Degree Parameter}
\label{sec:constants-appendix}

This section covers the estimation for the binomial version of the
model $T>0$.  The calculations for the threshold case $T=0$ are
analogous (and simpler).

Typically, a random graph generator accepts the expected average
degree or the number of edges as an input parameter.  In the following
we describe how binary search can be used to estimate the constant $c$
in the edge probability $p_{uv}$ (Eq.~(\ref{eq:girg-binom})) for a
desired expected average degree.  The constant is found based on the
actual weights instead of their probability distribution, because the
resulting average degree has lower variance and the generator should
be able to accept user-defined weights as well.  Note that we
implement GIRGs without explicitly modeling the constant $c$, because
scaling all weights by $c^T$ emulates the same behavior.

Let $X_{uv}$ be a random indicator variable for the existence of edge $uv$.
\begin{equation*}
\mathbb{E}[X_{uv}] = \mathbb{E}\left[ \min\left\{1, c \cdot \left( \frac{w_uw_v/W}{{||x_u - x_v||}^d} \right)^{1/T} \right\} \right] 
= \mathbb{E}\left[ \min\left\{1, \left( \frac{c^{\frac{T}{d}} \left( \frac{w_uw_v}{W} \right)^{\frac{1}{d}}}{||x_u - x_v||} \right)^{d/T} \right\} \right] 
\end{equation*}
Let $k=c^{\frac{T}{d}} \left( \frac{w_uw_v}{W} \right)^{1/d}$.
To remove the minimum, we distinguish between \emph{short edges} that are guaranteed to exist and \emph{long edges} that exist with probability below 1.
\begin{equation*}
\mathbb{E}[X_{uv}] = Pr(||x_u - x_v|| \leq k) 
+ Pr(k < ||x_u - x_v||) \cdot \mathbb{E}\left[ c\cdot \left( \frac{w_uw_v/W}{{||x_u - x_v||}^d} \right)^{1/T} \mid k < ||x_u - x_v|| \right] \\
\end{equation*}

If $k\geq0.5$ then the weights guarantee the existence of the edge $uv$ independent of position.
For simplicity we assume that $k\leq0.5$ for all vertex pairs.
In the end we subtract an error to account for the ignored pairs.
For any constant $t\leq0.5$, $Pr(||x_u - x_v||\leq t) = (2t)^d$, which is the fraction of the ground space which is covered by a hypercube with radius $t$. 
The probability for a short edge becomes
\begin{equation}
\label{eq:short}
Pr(||x_u - x_v|| \leq k) = (2k)^d = 2^dc^T \left( \frac{w_uw_v}{W} \right)
\end{equation}
The probability density function of $||x_u - x_v||$ between 0 and 0.5 is the derivative of $(2x)^d$, namely $d2^dx^{d-1}$.
We calculate the probability for a long edge based on two specific weights.
\begin{equation}
\label{eq:long}
\begin{aligned}
& Pr(k < ||x_u - x_v||) \cdot \mathbb{E}\left[ c \cdot \left( \frac{w_uw_v/W}{{||x_u - x_v||}^d} \right)^{1/T} \mid k < ||x_u - x_v|| \right] \\
&= Pr(k < ||x_u - x_v||) \cdot \frac{ \int_k^{0.5}c \cdot \left( \frac{w_uw_v/W}{x^d} \right)^{1/T} \cdot d2^dx^{d-1} dx }{Pr(k < ||x_u - x_v||)} \\
&= c \left(\frac{w_uw_v}{W}\right)^{1/T} d 2^d \int_k^{0.5} x^{d-1-d/T} dx \\
&= c \left(\frac{w_uw_v}{W}\right)^{1/T} d 2^d \left[ \frac{1}{d(1-1/T)} \cdot x^{d-d/T} \right]_k^{0.5} \\
&= c \left(\frac{w_uw_v}{W}\right)^{1/T} \frac{d 2^d}{d(1-1/T)} \left( \left(\frac{1}{2}\right)^{d-d/T} - k^{d(1-1/T)} \right) \\
&= c \left(\frac{w_uw_v}{W}\right)^{1/T} \frac{2^d}{1-1/T} \left( \frac{2^{d/T}}{2^d} - \left( c^{\frac{T}{d}} \left( \frac{w_uw_v}{W} \right)^{1/d} \right)^{d(1-1/T)} \right) \\
&= c \left(\frac{w_uw_v}{W}\right)^{1/T} \frac{2^{d/T}}{1-1/T} - c \left(\frac{w_uw_v}{W}\right)^{1/T} \frac{2^d}{1-1/T} c^{T-1} \left( \frac{w_uw_v}{W} \right)^{1-1/T} \\
&= c \left(\frac{w_uw_v}{W}\right)^{1/T} \frac{2^{d/T}}{1-1/T} - c^T \left(\frac{w_uw_v}{W}\right) \frac{2^d}{1-1/T} \\
\end{aligned}
\end{equation}
We add short and long edges (Eq.~\ref{eq:short} and Eq.~\ref{eq:long}).
\begin{equation}
\label{eq:shortlong}
\begin{aligned}
\mathbb{E}[X_{uv}] &=
2^dc^T \left( \frac{w_uw_v}{W} \right) + c \left(\frac{w_uw_v}{W}\right)^{1/T} \frac{2^{d/T}}{1-1/T} - c^T \left(\frac{w_uw_v}{W}\right) \frac{2^d}{1-1/T} \\
&= 2^dc^T \left( \frac{w_uw_v}{W} \right)\left(1+\frac{1}{1/T-1}\right) - c \left(\frac{w_uw_v}{W}\right)^{1/T} \frac{2^{d/T}}{1/T-1} \\ 
&= c^{T}\frac{2^d}{1-T} \left( \frac{w_uw_v}{W} \right) - c \frac{2^{d/T}}{1/T-1} \left(\frac{w_uw_v}{W}\right)^{1/T} \\ 
\end{aligned}
\end{equation}
The expected average degree $\mathbb{E}[\bar d]$ is computed as follows.
\begin{equation}
\label{eq:deg}
\mathbb{E}[\bar{d}] \cdot n = 
c^T\frac{2^d}{1-T} \sum_{u\in V}\sum_{v\neq u} \left( \frac{w_uw_v}{W} \right) - c \frac{2^{d/T}}{1/T-1} \sum_{u\in V}\sum_{v\neq u} \left(\frac{w_uw_v}{W}\right)^{1/T} \\
\end{equation}
We compute the sums for all vertex pairs by subtracting an error for the reflexive edges.
\begin{equation*}
\begin{aligned}
&\sum_{u\in V}\sum_{v\neq u} \left( \frac{w_uw_v}{W} \right) = \sum_{u\in V}\sum_{v\in V}\frac{w_uw_v}{W} - \sum_{v\in V}\frac{w_v^2}{W} = W - \sum_{v\in V}\frac{w_v^2}{W} \\
&\sum_{u\in V}\sum_{v\neq u} \left(\frac{w_uw_v}{W}\right)^{1/T} = \sum_{u\in V} \left( \frac{w_u^{1/T}}{W^{1/T}} \sum_{v\in V} w_v^{1/T} \right) - \sum_{v\in V}\left(\frac{w_v^2}{W}\right)^{1/T}
\end{aligned}
\end{equation*}
We earlier assumed $k\leq0.5$ and still have to subtract an error for the ignored vertex pairs. 
Let $E_R$ be the set of vertex pairs $(u,v)$ with $0.5 < k = c^{\frac{T}{d}} \left( \frac{w_uw_v}{W} \right)^{1/d}$ and let $R$ be the set of vertices with at least one edge in $E_R$.
For each vertex pair in $E_R$, the probability for a short edge is 1 instead of what we got in Equation~\ref{eq:short} and the probability for a long edge is 0 instead of Equation~\ref{eq:long}. 
Therefore, the error due to an edge $(u,v) \in E_R$ can be obtained by subtracting 1 from Equation~\ref{eq:shortlong}.

Now we are ready to find the constant $c$ for a desired average degree using binary search over the monotone function $f(c) = \mathbb{E}[\bar d]$.
The function $f$ is given by Eq.~\ref{eq:deg} substituting the sums and subtracting the error for vertex pairs with $k>0.5$.
\begin{equation*}
\begin{aligned}
f(c) &= 
c^T \cdot \frac{2^d}{n(1-T)} \left( W - \sum_{v\in V}\frac{w_v^2}{W} \right) \\
&- c \cdot \frac{2^{d/T}}{n(1/T-1)} \left(\sum_{u\in V} \left( \frac{w_u^{1/T}}{W^{1/T}} \sum_{v\in V} w_v^{1/T} \right) - \sum_{v\in V}\left(\frac{w_v^2}{W}\right)^{1/T}\right) \\
&- \frac{1}{n} \sum_{(u,v)\in E_R} \left( c^T\frac{2^d}{1-T} \left( \frac{w_uw_v}{W} \right) - c \frac{2^{d/T}}{1/T-1} \left(\frac{w_uw_v}{W}\right)^{1/T} - 1 \right)
\end{aligned}
\end{equation*}
The binary search takes $O(n)$ time to compute various sums and $O(1+|E_R|)$ per evaluation of $f(c)$.
We partially sort the weights for all vertices in $R$ to iterate efficiently over $E_R$.
Since the upper and lower bound for the binary search are found with an exponential search, the size of $R$ might grow until the upper bound is found.
We lazily extend a sorted prefix of the weight array while raising the upper bound.

The time to evaluate $f(c)$ can be quadratically reduced from $O(|E_R|)$ to $O(|R|)$ 
by exploiting that for any two vertices $u$ and $v$ with $w_u \leq w_v$, a pair $(u,x)\in E_R$ implies $(v,x)\in E_R$.
Thus, if we iterate the vertices in $R$ by increasing weight, we can reuse the computations for the last vertex.

\subsection{Efficiently Encoding and Decoding Morton Codes}
\label{sec:morton-appendix}

Recall from Section~\ref{sec:cellaccess} that we linearize the
$d$-dimensional grid of cells using Morton code.  As vertex positions
are given as $d$-dimensional coordinates, we have to convert the
coordinates to Morton codes (i.e., the index in the linearization) and
vice versa.  A Morton code is obtained by bitwise interleaving two or
more coordinates.  E.g., the $2$-dimensional Morton code of the
four-bit coordinates $a=a_3a_2a_1a_0$ and $b=b_3b_2b_1b_0$ is
$a_3b_3a_2b_2a_1b_1a_0b_0$.  Implementation-wise, there are the
following encoding approaches.

\begin{description}
\item[FOR, FOR OPT] Set each bit of the result with shifts and bitwise operations (FOR).
    Since we know the level of a cell, we know the number of relevant bits in each coordinate. 
    Considering only relevant bits improves performance significantly (FOR OPT).
\item[MASKS] For details on this method, we refer to the open-source library \emph{libmorton} \cite{b-l-18} and the authors related blog posts\footnote{\url{https://www.forceflow.be}}.
    The approach is hard to generalize to multiple dimensions.
\item[LUT] A lookup table computed at compile time\footnote{\url{https://github.com/kevinhartman/morton-nd}} can be used.
    The input is divided into chunks; a precomputed result for each chunk is obtained and shifted into place. 
\item[BMI2]
    The \emph{Parallel Bits Deposit/Extract} assembler instructions from Intels Bit Manipulation Instruction Set 2~\cite{intel-manual-19} provide a solution with one assember instruction per input coordinate.
    BMI2 is available on Intel CPUs since 2013 and supported by recent AMD CPUs (Zen).
\end{description}

\begin{figure}
  \centering
  \includegraphics[width=\textwidth]{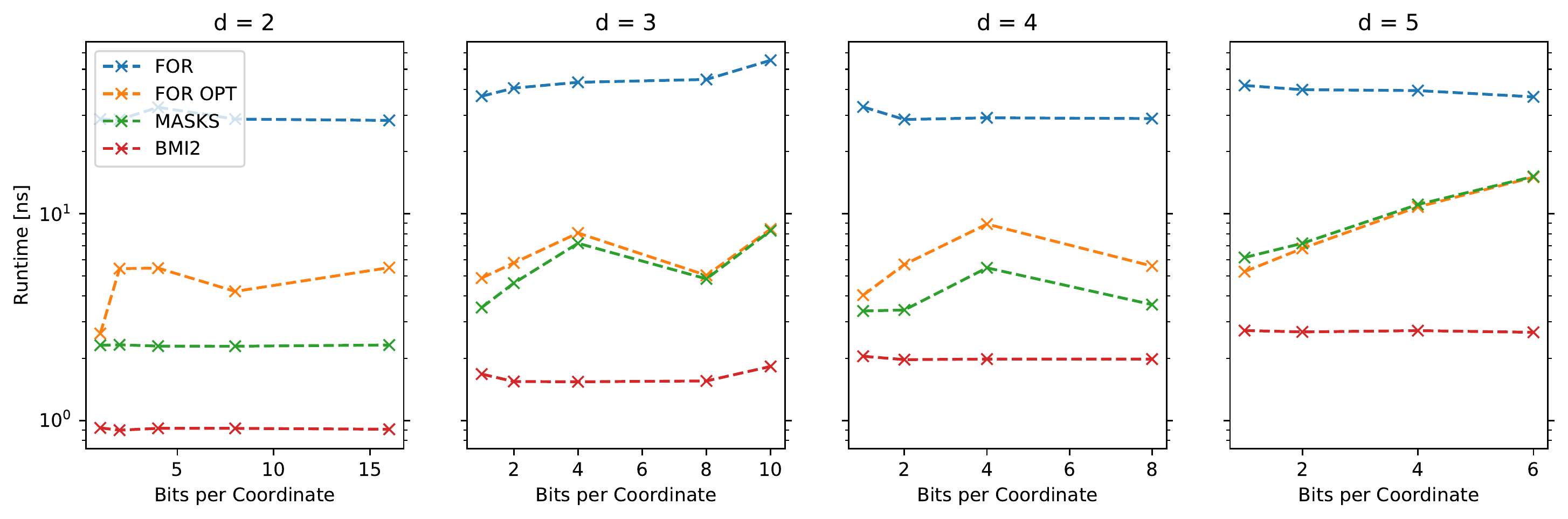}
  \caption{Performance of Morton code generation in dimensions 2 to 5 on an Intel processor. Input coordinates are limited to $\lfloor32/d\rfloor$ bits each, because the result is saved as a 32 bit integer.}
  \label{fig:mortonintel}
\end{figure}

\begin{figure}
	\centering
    \includegraphics[width=\textwidth]{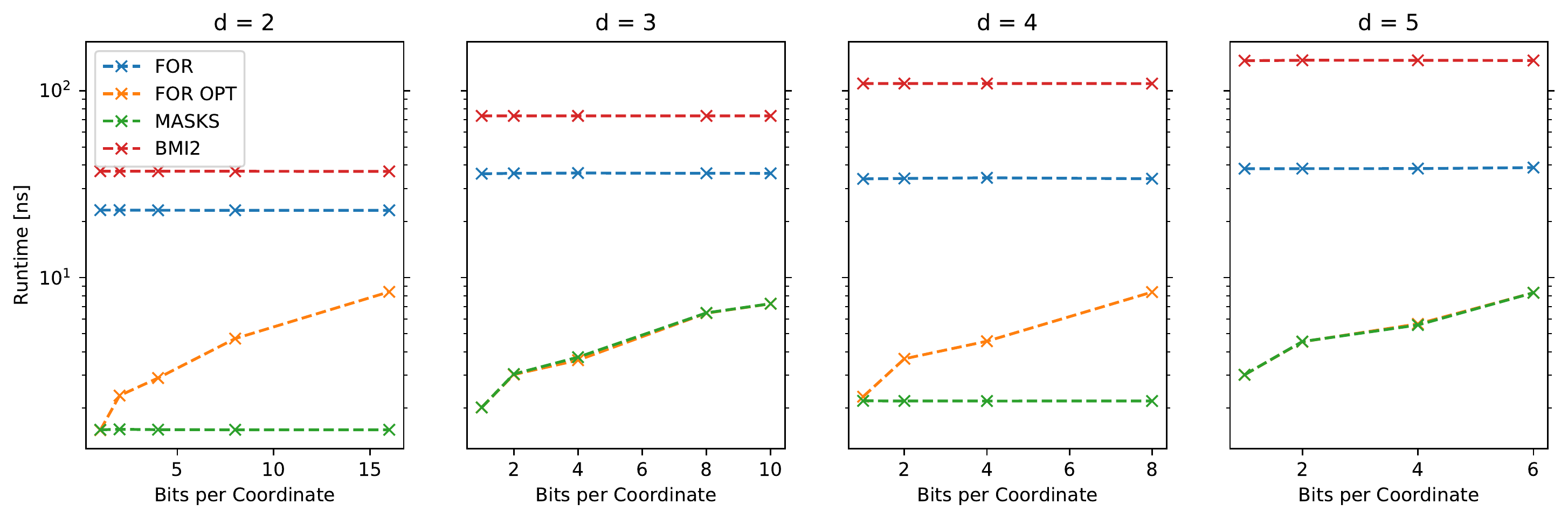}
    \caption{Performance of Morton code generation in dimensions 2 to 5 on an AMD processor. Input coordinates are limited to $\lfloor32/d\rfloor$ bits each, because the result is saved as a 32 bit integer.}
    \label{fig:mortonamd}
\end{figure}

All approaches except LUT support a complementary decoding operation.
We measured the approaches, excluding LUT, on an Intel i7-8550U processor (see Figure~\ref{fig:mortonintel}) and an AMD Ryzen7-2700X (see Figure~\ref{fig:mortonamd}).
On Intel, \textsc{BMI2} is consistently the fastest and at least an order of magnitude faster than \textsc{FOR}.
Surprisingly, \textsc{FOR OPT} is not monotone in the number of bits per coordinate for dimensions below 5. 
Inspection of the generated assembly\footnote{g++8 -std=c++14 -O3 -march=skylake} reveals that the compiler employed SIMD instructions.
On AMD, \textsc{BMI2} is the slowest.
Our GIRG generator uses BMI2 if enabled and the loop with early termination (FOR OPT) otherwise.

\subsection{Avoiding Computationally Expensive Math for HRGs}
\label{sec:nomath-appendix}

The HRG model requires many computationally expensive mathematical
operations.  We significantly improve the performance of the generator
by avoiding or reusing the results of those operations.  The first
optimization applies to the threshold variant and the second
optimization to the binomial version.

For the threshold model, an edge exists if the distance $d$ is smaller than $R$.
Considering how the hyperbolic the distance is defined (Section~\ref{sec:hrgmodel}), 
reformulating it to $\cosh(d) < \cosh(R)$ avoids the expensive $\acosh$, while $\cosh(R)$ remains constant during execution.
Similar to recent threshold HRG generators, we compute intermediate values per vertex such that $\cosh(d)$ can be computed using only multiplication and addition~\cite{flm-cfmdgg-17,p-gprhg-17}.

\begin{figure}
	\centering
	\begin{subfigure}[t]{0.45\textwidth}
		\centering
		\includegraphics[width=\textwidth]{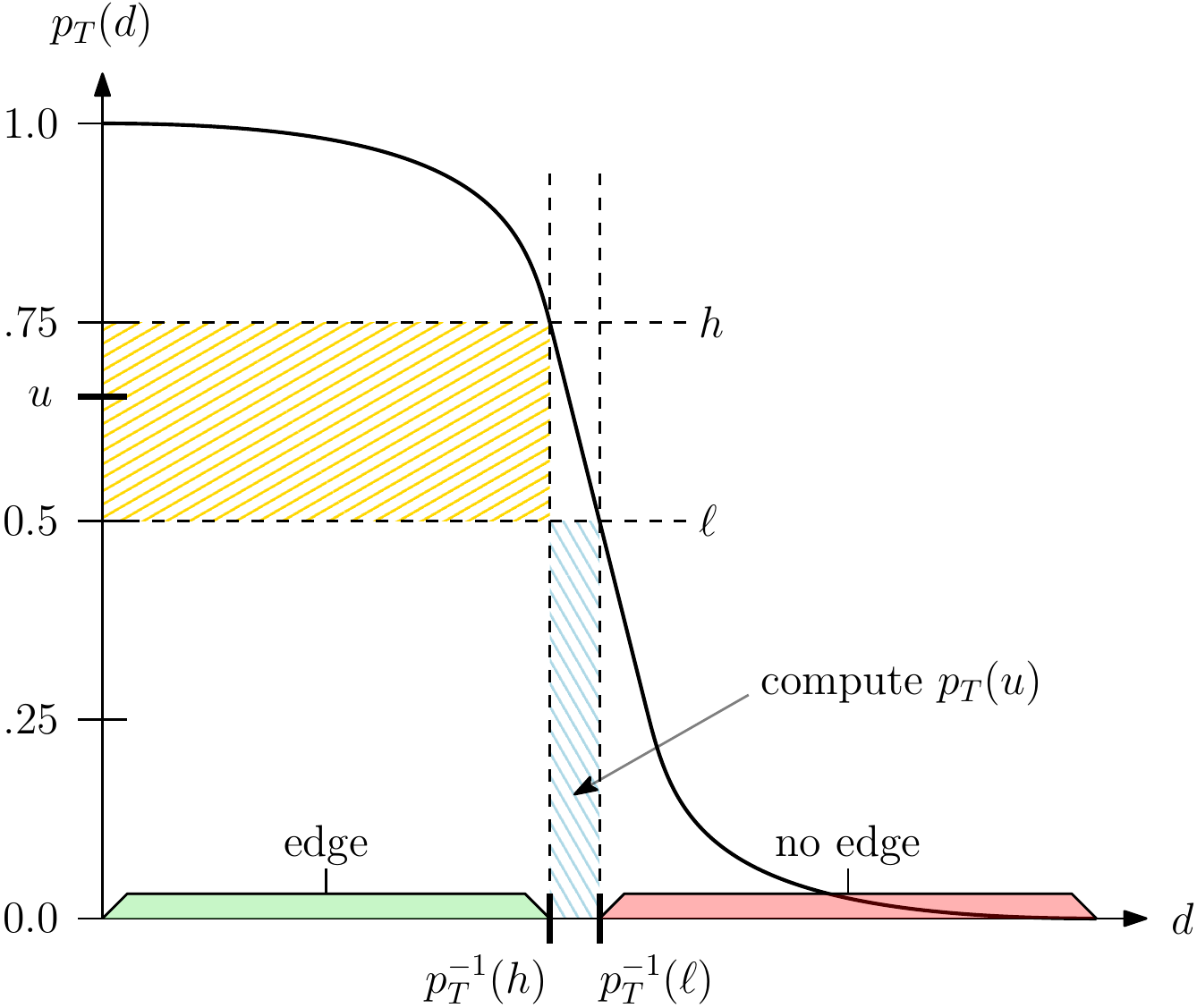}
		\caption{Sketch of the distance filter optimization to avoid computationally expensive mathematical operations, providing a x2 speedup.} 
		\label{fig:filter}
	\end{subfigure}
	\quad
	\begin{subfigure}[t]{0.45\textwidth}
		\centering
		\includegraphics[width=\textwidth]{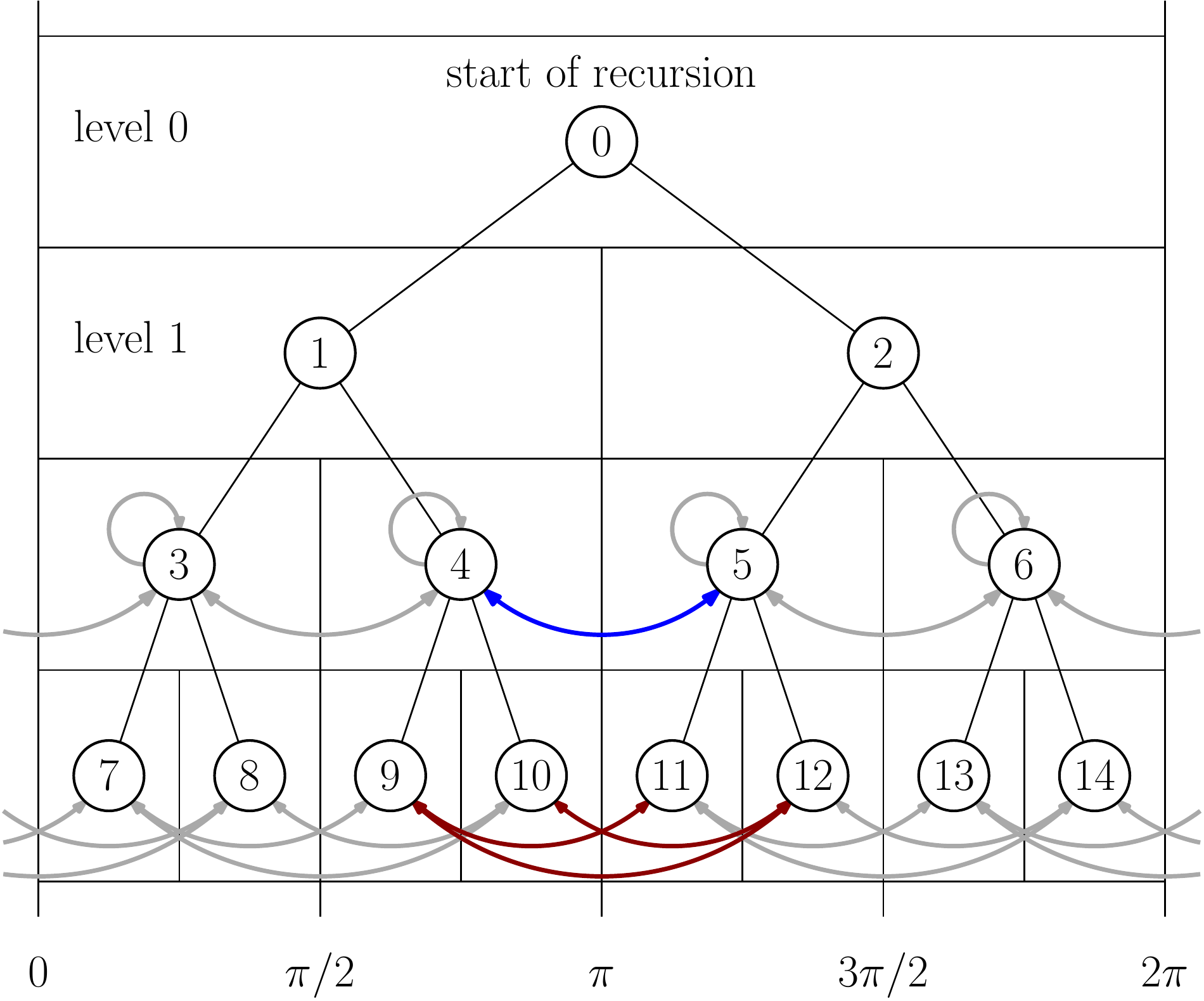}
		\caption{Visited cell pairs up to level 3. The arrows represent 
        the 8 neighboring cell pairs in level 2 and 12 distant cell pairs in level 3.}
		\label{fig:visitcalls}
	\end{subfigure}
	\caption{Distance Filter (left) and tasks for parallelization
          in the $1$-dimensional case (right).}
\end{figure}

For the binomial model, evaluating the connection probability $p_T(d)$
from the optimized $\cosh(d)$ is a performance bottleneck and made up
half of the total run time.
Evaluating $p_T(d)$ includes an expensive exponential function and cannot avoid the $\acosh$ like in the threshold model.
We introduce a distance filter (see Figure~\ref{fig:filter}) to reduce the frequency of the operation resulting in a speedup of approximately factor two.
The process before was: compute the probability $p_T(d)$, sample a uniform random value $u \in [0,1]$, and emit an edge if $u < p_T(d)$.  
The idea of our filter is that we invert the probability function and compute $p_T^{-1}(d)$ in advance for multiple equidistant values between 0 and 1.
Each entry $x \mapsto p_T^{-1}(x)$ in the filter represents the distance --- or rather proximity --- needed for an edge probability of $x$.
During edge sampling, we generate $u \in [0,1]$ before evaluating $p_T$. 
The generated $u$ falls in an interval between two precomputed entries $l \leq u < h$ in our filter.
We know that $p_T$ is monotonically decreasing so $p_T^{-1}(l) \geq p_T^{-1}(u) > p_T^{-1}(h)$, meaning the higher the distance the lower the probability and vice versa.
Instead of emitting an edge if and only if $u < p_T(d)$, we emit an edge if $p_T^{-1}(h) \geq d$ and skip the edge if $p_T^{-1}(l) \leq d$.
Only if $p_T(d)$ is in the interval between $l$ and $h$, the expensive $p_T(d)$ has to be evaluated.
Since $u$ is uniformly distributed, the probability to hit the interval where $p_T(d)$ has to be evaluated is $1/(k-1)$, where $k$ is the number of entries in the filter.
Our generator uses $k=100$.
Additionally, we avoid the $\acosh$ by directly storing $\cosh[p_T^{-1}(x)]$ in the filter.

\subsection{Parallelization}
\label{sec:parallel-appendix}

This section describes how the sampling algorithm can be parallelized
focusing on the preprocessing building the geometric data structure
(Section~\ref{sec:cellaccess}) and on the recursion enumerating pairs
of grid cells for sampling the edges (Section~\ref{sec:loopswap}).
The presented approach applies to the GIRG and HRG implementations.

The preprocessing for a weight bucket $i$ computes the containing cell
on the insertion level for all vertices in $V_i$.  We optimized the
process by processing all weight buckets together.  The containing
cell for all vertices in $V$ is computed in parallel.  We sort
vertices by weight bucket first and by cell second.  Theoretically,
bucket sort results in linear run time.  For the implementation,
however, we use a parallel radix sort instead.  The vertices $V_i$ of
weight bucket $i$ form a contiguous subsequence in $V$.  Moreover,
$V_i$ is sorted by cell, allowing parallel computation of the prefix
sums for all cells in the insertion level of the weight bucket.

The recursion is executed in parallel and experiments suggest a near optimal scaling when the number of threads is a power of two. 
Each thread has a local random generator. 
We use static scheduling to produce deterministic results even for the binomial model.
However, the ordering of edges in the edge list varies, because each thread locally buffers generated edges before writing them while locking a mutex.
We distinguish two stages of execution.  The first stage is to ``saw
off'' the recursion tree at a certain level and collect the omitted
recursive calls as \emph{tasks} to execute in stage two.  A task is
represented by a cell pair from which to pick up the execution later.
One thread collects the tasks by traversing the recursion tree without
sampling any edges (omitting lines 1-6 in
Algorithm~\ref{alg:recursive}).  Meanwhile, the other threads process
the pairs that the main thread passed through.  When all tasks are
collected stage two begins.  In stage two, the threads pick up the
``loose ends'' of the cut recursion tree.  There are three different
types of tasks with varying load.  For $1$-dimensional geometry, level
$\ell>2$, and assuming a number of threads that is constant in $n$,
the types of tasks are the following.
There are $2^\ell$ \emph{heavy tasks} given by a neighboring cell pair
of the form $(A,A)$.  Their number of recursive calls grows
exponentially with each subsequent level implying a load of $O(n)$.
    There are $2^\ell$ \emph{light tasks} given by a neighboring cell pair of the form $(A,A+1)$.
    They produce four recursive calls per subsequent level implying a load of $O(\log n)$.
    Finally, there are $3\cdot2^{\ell-1}$ \emph{constant tasks} given
    by a distant cell pair.  They invoke no recursive calls at all.
    The number of distant cell pairs in a level is explained by
    Figure~\ref{fig:visitcalls}.  For each cell $B$ in level $\ell-1$
    with children $A$ and $A+1$, the distant cell pairs in level
    $\ell$ are $(A,A+2), (A, A+3), (A+1, A+3)$.

    Since heavy tasks dominate the run time during stage two, we
    distribute heavy tasks evenly among all threads.  This is why the
    approach scales best when the number of threads is a power of two.
    The level where we saw off the recursion tree is a tuning
    parameter of the generator.  We choose it, such that there are two
    heavy tasks per thread to reduce load imbalance if one thread
    stalls.  To apply the same scheduling approach to higher
    dimensions it suffices to know that the load of tasks remains
    similar and the number of heavy tasks is~$2^{\ell d}$.


\end{document}